\documentclass[11pt,a4paper]{article}

\usepackage{amsmath,amssymb,amsthm,mathrsfs}
\usepackage{enumitem}
\usepackage{booktabs}
\usepackage{float}
\usepackage{microtype}
\usepackage{hyperref}
\hypersetup{
  hidelinks,
  pdftitle={Statistical Inference for Probability Barycenters and Kolmogorov Moments},
  pdfauthor={Manuela-Simona Cojocea}
}

\newtheorem{theorem}{Theorem}[section]
\newtheorem{proposition}[theorem]{Proposition}
\newtheorem{lemma}[theorem]{Lemma}
\newtheorem{corollary}[theorem]{Corollary}
\theoremstyle{definition}
\newtheorem{definition}[theorem]{Definition}
\newtheorem{assumption}[theorem]{Assumption}
\newtheorem{remark}[theorem]{Remark}

\newcommand{\E}{\mathbb{E}}
\newcommand{\Prob}{\mathbb{P}}
\newcommand{\R}{\mathbb{R}}
\newcommand{\Var}{\operatorname{Var}}
\newcommand{\Cov}{\operatorname{Cov}}

\newcommand{\IF}{\operatorname{IF}}
\newcommand{\diag}{\operatorname{diag}}

\title{Statistical Inference for Probability Barycenters and Kolmogorov Moments}
\author{Manuela-Simona Cojocea}
\date{}

\begin{document}

\maketitle

\begin{abstract}
Let $G:I\to(0,1)$ be a continuous strictly increasing bijection, called a probability coordinate chart, and represent an observation $X$ by its probability coordinate $U=G(X)$. Averaging is always well defined on the bounded coordinate scale. For barycentric inference, the first coordinate moment is interpreted through $G^{-1}$. Higher initial coordinate moments similarly generate initial Kolmogorov moments by pullback, while centred coordinate moments remain on the probability scale. In each case the inferential law depends on how the chart entered the problem. This paper develops inference for probability barycenters, initial Kolmogorov moments, and centred coordinate moments under fixed, intrinsic, and estimated charts. The geometric construction and its probability theory are developed in \cite{Cojocea2026}. Here the emphasis is deliberately practical: variance estimation, confidence sets, and bootstrap approximation.

For a fixed benchmark chart, the coordinate parameter $p_G=\E[G(X)]$ is treated as the primary inferential object. Confidence intervals are constructed on the probability scale and then pulled back through $G^{-1}$, which preserves the geometry and allows asymmetry in value space. Hoeffding's inequality also yields finite-sample distribution-free intervals. The intrinsic case requires more care. The empirical CDF is not an admissible chart, and its natural generalised plug-in construction collapses exactly to a middle order statistic. Feasible intrinsic inference reduces to median inference. We separate this self-induced functional from the frozen oracle calculation that produces the smaller variance $1/(12f(m)^2)$.

For estimated location--scale charts, the asymptotic expansion contains a calibration correction that records the first-order effect of learning the chart from the same sample. We turn that correction into an implementable influence-function variance estimator and a bootstrap procedure that recalibrates the chart in every resample. Joint covariance theory is then developed for finite vectors of initial and centred coordinate moments, with initial Kolmogorov moments obtained by pulling the former through $G^{-1}$. Bounded coordinates do not make uncertainty disappear. They make its sources visible enough to estimate: variation on the probability scale, amplification by the inverse chart, and the contribution of chart calibration together with its covariance with the coordinate mean.
\end{abstract}

\section{Introduction}

Probability geometry begins with a simple decision: the scale on which averaging will be performed. Let $I\subseteq\R$ be an interval and let
\[
G:I\longrightarrow(0,1)
\]
be a continuous strictly increasing bijection. In the framework introduced in \cite{Cojocea2026}, $G$ is called a \emph{probability coordinate chart}. Because $G$ is strictly increasing and bijective,
\[
x<y \quad\Longleftrightarrow\quad G(x)<G(y).
\]
Equivalently, $G$ is an order-isomorphism between the observation scale and the probability scale. It preserves the ordering of observations while generally changing the distances between them. Once $G$ is extended by $0$ to the left of $I$ and by $1$ to the right, it also becomes the cumulative distribution function of the reference law generated by that chart. The CDF interpretation is earned by the coordinate structure itself. It need not be imposed beforehand.

For a random variable $X$ supported on $I$, write
\[
U=G(X),\qquad p_G=\E[U],\qquad b_G=G^{-1}(p_G).
\]
Reference~\cite{Cojocea2026} develops the geometry associated with this construction and proves the probability results needed here, including existence, limit theory, concentration, and moment determinacy. The present paper takes those results as established. Its concern is the uncertainty left behind when the population quantities are replaced by data.

Arithmetic averaging has not disappeared. It takes place in the chart selected for the problem. This small shift of viewpoint creates several inferential obligations. The coordinate mean has a sampling variance that must be estimated. A confidence interval formed on $(0,1)$ must be carried back to the observation scale without forgetting the curvature of $G^{-1}$. When the chart is learned from the observations, the calibration step contributes its own first-order randomness. Initial coordinate moments, their Kolmogorov pullbacks, and centred coordinate moments also need a joint treatment because their estimators are strongly dependent functions of the same transformed sample.

The status of the chart decides which of these obligations arise. A fixed benchmark chart is declared by the analyst and gives a perfectly feasible target, although that target remains anchored to the chosen geometry. The intrinsic chart $F_P$ has a special conceptual position, but it is unknown. Its empirical CDF surrogate falls outside the admissible chart class and collapses to a middle order statistic. Estimated charts occupy the useful space between these cases. Their shape is declared, while location and scale are calibrated from the data. Equivariance is recovered, and the first-order effect of that calibration appears explicitly in the influence function.

The same inferential viewpoint is applied to the finite vector
\[
\bigl(G(X),G(X)^2,\ldots,G(X)^q\bigr).
\]
Working with this vector reveals the covariance between moment orders before any pullback is performed. Initial coordinate moments, their pulled-back Kolmogorov counterparts, centred coordinate moments, simultaneous intervals, and joint bootstrap procedures can then be obtained from one coherent calculation.

Section~\ref{sec:targets} introduces the targets and the three chart regimes. Sections~\ref{sec:fixed}--\ref{sec:estimated} develop inference for fixed, intrinsic, and estimated charts. Section~\ref{sec:moments} treats initial coordinate moments, initial Kolmogorov moments, and centred coordinate moments. Robustness and efficiency are examined in Section~\ref{sec:robustness}. Section~\ref{sec:numerics} gives asymptotic variance benchmarks and a reproducible Monte Carlo illustration.

\section{Statistical targets and chart regimes}\label{sec:targets}

\subsection{Probability barycenters}

Let $G:I\to(0,1)$ be a probability coordinate chart and let $X$ satisfy $\Prob(X\in I)=1$.

\begin{definition}[Coordinate mean and probability barycenter]
The \emph{coordinate mean} and its pullback are
\[
p_G(P)=\E_P[G(X)],
\qquad
b_G(P)=G^{-1}\!\bigl(p_G(P)\bigr).
\]
We call $b_G(P)$ the \emph{probability barycenter} associated with $G$.
\end{definition}

The pair $(p_G,b_G)$ carries the whole inferential logic of the paper. Sampling takes place through the bounded variable $G(X)$, so $p_G$ is the natural object for variance estimation and concentration. The pullback $b_G$ is the answer expressed on the original scale. Its uncertainty inherits the local steepness of $G^{-1}$. Losing sight of this separation makes the method look simpler than it is and hides precisely where geometric amplification enters.

\subsection{Initial Kolmogorov moments and centred coordinate moments}

Let $U=G(X)$. For $r\geq1$, define
\[
p_r^{(G)}=\E[U^r],
\qquad
M_r^{(G)}=G^{-1}\!\bigl(p_r^{(G)}\bigr).
\]
The quantities $p_r^{(G)}$ are the initial coordinate moments. Their pullbacks $M_r^{(G)}$ are the initial Kolmogorov moments. The centred coordinate moments are
\[
\mu_{r,c}^{(G)}=\E\bigl[(U-p_1^{(G)})^r\bigr].
\]
A centred coordinate moment describes spread or shape around the coordinate mean. It is not itself a probability level and may be zero or negative, so applying $G^{-1}$ would have no coherent geometric meaning.

\subsection{Three chart regimes}

Three regimes recur throughout the paper. Under a \emph{fixed chart}, $G$ is completely specified before the observations are seen. In the \emph{intrinsic construction}, the population chart is $F_P$, and feasibility is attempted through the empirical CDF and its generalised inverse. Under an \emph{estimated chart}, $G_\theta$ belongs to a finite-dimensional family and the parameter $\theta$ is learned from the sample. These labels identify genuinely different statistical objects. Their random terms are different, so their variance formulas cannot be exchanged by analogy.

\section{Inference with a fixed benchmark chart}\label{sec:fixed}

Let $X_1,\ldots,X_n$ be i.i.d. from $P$, let $G$ be fixed, and set
\[
U_i=G(X_i),\qquad
\widehat p_G=\overline U_n=\frac1n\sum_{i=1}^nU_i,
\qquad
\widehat b_G=G^{-1}(\widehat p_G).
\]
The required limit theory is established in \cite{Cojocea2026}. Here the question is more demanding in practice: how can the asymptotic variance be estimated from the sample, and how should uncertainty be reported after the coordinate mean is pulled back through a nonlinear inverse chart?

\subsection{Variance estimation}

Let
\[
v_G=\Var\bigl(G(X)\bigr),
\qquad
\tau_G^2=\frac{v_G}{[G'(b_G)]^2},
\]
and define
\[
\widehat v_G
=
\frac1{n-1}\sum_{i=1}^n(U_i-\overline U_n)^2,
\qquad
\widehat\tau_G^2
=
\frac{\widehat v_G}{[G'(\widehat b_G)]^2}.
\]

\begin{proposition}[Consistent fixed-chart variance estimator]\label{prop:fixed-var}
Assume that $G$ is continuously differentiable on a neighbourhood of $b_G$ and that $G'(b_G)>0$. Then
\[
\widehat\tau_G^2\xrightarrow{\Prob}\tau_G^2.
\]
Thus $\widehat\tau_G^2/n$ consistently estimates the first-order sampling variance of $\widehat b_G$.
\end{proposition}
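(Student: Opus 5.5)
The plan is to treat numerator and denominator separately and combine them with the continuous mapping theorem. Since $U_i=G(X_i)\in(0,1)$ are i.i.d. and bounded, every moment is finite. The weak law of large numbers gives $\overline U_n\xrightarrow{\Prob}p_G$ and $n^{-1}\sum U_i^2\xrightarrow{\Prob}\E[U^2]$. The algebraic identity
\[
\widehat v_G=\frac{n}{n-1}\Bigl(\frac1n\sum_{i=1}^nU_i^2-\overline U_n^{\,2}\Bigr)
\]
then yields $\widehat v_G\xrightarrow{\Prob}v_G$ by Slutsky's lemma and continuity of polynomials. No integrability assumption on $X$ itself is needed; this is the practical payoff of working on the bounded coordinate scale.

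For the denominator, first show $\widehat b_G\xrightarrow{\Prob}b_G$. Note that $p_G\in(0,1)$, since $0<U<1$ almost surely. The map $G^{-1}:(0,1)\to I$ is continuous, because it is the inverse of a continuous strictly increasing bijection between intervals. Hence the continuous mapping theorem applies to $\overline U_n\to p_G$. The event $\{\overline U_n\in(0,1)\}$ holds surely, so $\widehat b_G$ is always well defined.

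Next, $G'$ is continuous on a neighbourhood $N$ of $b_G$, with $G'(b_G)>0$. The composite map
\[
x\mapsto \frac{1}{[G'(x)]^2}
\]
is therefore continuous at $b_G$, after shrinking $N$ so that $G'>0$ on it. The only delicate point, and the one I expect to need care, is that $G'$ may be undefined or vanish outside $N$. I would handle this by working on the event $A_n=\{\widehat b_G\in N\}$. Its probability tends to one by the consistency of $\widehat b_G$, and convergence in probability is unaffected by redefining the estimator off an event of vanishing probability. Then $[G'(\widehat b_G)]^{-2}\xrightarrow{\Prob}[G'(b_G)]^{-2}$.

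Multiplying the two limits by Slutsky's lemma gives $\widehat\tau_G^2\xrightarrow{\Prob}\tau_G^2$.

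For the final sentence of the proposition, I would recall the delta method. The central limit theorem applies because $U$ is bounded, giving $\sqrt n(\overline U_n-p_G)\Rightarrow N(0,v_G)$. The inverse function theorem gives $(G^{-1})'(p_G)=1/G'(b_G)$. Together these yield $\sqrt n(\widehat b_G-b_G)\Rightarrow N(0,\tau_G^2)$, so $\tau_G^2/n$ is the first-order variance of $\widehat b_G$. By the consistency just shown, the ratio $\widehat\tau_G^2/\tau_G^2\to1$ in probability whenever $\tau_G^2>0$, that is, whenever $X$ is nondegenerate. Consequently $\widehat\tau_G^2/n$ estimates that variance consistently in the relative sense appropriate to interval construction.
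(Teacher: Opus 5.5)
Your proposal is correct and follows essentially the same route as the paper: consistency of the sample variance of the bounded coordinates, consistency of $\widehat b_G$, continuity and positivity of $G'$ at $b_G$, and the continuous mapping theorem. The differences are minor. You work in probability rather than almost surely, and you derive $\widehat b_G\to b_G$ directly from continuity of $G^{-1}$ instead of citing the companion paper's strong consistency result. You also make explicit the localisation to the event $\{\widehat b_G\in N\}$ and the delta-method justification of the final sentence, both of which the paper leaves implicit.
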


\begin{proof}
Because $U_i\in(0,1)$, the sample variance satisfies $\widehat v_G\to v_G$ almost surely. The strong consistency result in \cite{Cojocea2026} gives $\widehat b_G\to b_G$ almost surely. Continuity and positivity of $G'$ at $b_G$ imply $G'(\widehat b_G)\to G'(b_G)$, and the result follows by the continuous mapping theorem.
\end{proof}

\subsection{Asymptotic confidence intervals}

The native confidence interval lives on the probability scale. Constructing it there keeps the bounded parameter visible and lets monotonicity carry the entire set back to $I$. A symmetric interval around $\widehat b_G$ treats the inverse chart as locally straight. That approximation is often adequate, but it should not be mistaken for the geometry itself.

\begin{theorem}[Coordinate-scale Wald inference]\label{thm:fixed-wald}
Under the assumptions of Proposition~\ref{prop:fixed-var}, let $z_{1-\alpha/2}$ denote the $(1-\alpha/2)$ standard normal quantile and define
\[
I_{p,n}(\alpha)
=
\left[
\widehat p_G-z_{1-\alpha/2}\sqrt{\frac{\widehat v_G}{n}},
\widehat p_G+z_{1-\alpha/2}\sqrt{\frac{\widehat v_G}{n}}
\right]\cap[0,1].
\]
Then
\[
\Prob\bigl(p_G\in I_{p,n}(\alpha)\bigr)\longrightarrow1-\alpha.
\]
Using the extended inverse convention
\[
G^{-1}(0)=\inf I,
\qquad
G^{-1}(1)=\sup I,
\]
\[
I_{b,n}(\alpha)=G^{-1}\!\bigl(I_{p,n}(\alpha)\bigr)
\]
satisfies
\[
\Prob\bigl(b_G\in I_{b,n}(\alpha)\bigr)\longrightarrow1-\alpha.
\]
\end{theorem}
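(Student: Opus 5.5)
The plan is to prove the coordinate-scale statement first and then transport it through $G^{-1}$ by monotonicity, so that no delta method is needed for the second claim.

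\emph{Step 1 (CLT and studentisation).} Since $U_i=G(X_i)\in(0,1)$ are i.i.d. and bounded, the classical central limit theorem gives $\sqrt n(\widehat p_G-p_G)\Rightarrow N(0,v_G)$, provided $v_G>0$. Positivity of $v_G$ is the one point to settle. I would argue that the assumptions of Proposition~\ref{prop:fixed-var} are implicitly non-degenerate, and otherwise assume $P$ is not a point mass. If $v_G=0$, then $U$ is almost surely constant, equal to $p_G$. In that case $\widehat p_G=p_G$ exactly and $\widehat v_G=0$, so the interval is $\{p_G\}$ and covers with probability one; the limit $1-\alpha$ then fails, which is why the non-degeneracy caveat is needed.

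Proposition~\ref{prop:fixed-var}, or more precisely its proof, gives $\widehat v_G\to v_G$ almost surely. Slutsky's lemma then yields $T_n=\sqrt n(\widehat p_G-p_G)/\sqrt{\widehat v_G}\Rightarrow N(0,1)$. Because the normal limit is continuous, $\Prob(|T_n|\le z_{1-\alpha/2})\to1-\alpha$. The untruncated event $\{|T_n|\le z_{1-\alpha/2}\}$ is exactly the event that $p_G$ lies in the untruncated interval. Intersecting with $[0,1]$ changes nothing here, since $p_G\in(0,1)\subset[0,1]$. This proves the first claim.

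\emph{Step 2 (pullback).} Extend $G^{-1}$ to $[0,1]\to\overline I$ using the stated endpoint convention. The extension is continuous and strictly increasing in the extended-real sense. Write $I_{p,n}(\alpha)=[L_n,R_n]$ with $0\le L_n\le R_n\le1$. Monotonicity gives $I_{b,n}(\alpha)=[G^{-1}(L_n),G^{-1}(R_n)]$, an interval in the extended line. For $p_G\in(0,1)$, strict monotonicity and bijectivity give
\[
L_n\le p_G\le R_n \iff G^{-1}(L_n)\le b_G\le G^{-1}(R_n).
\]
The endpoint cases are covered by the convention: if $L_n=0$, then $G^{-1}(L_n)=\inf I<b_G$ automatically, and likewise if $R_n=1$. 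Hence $\{b_G\in I_{b,n}(\alpha)\}=\{p_G\in I_{p,n}(\alpha)\}$ as events, and the second limit follows from Step~1. The differentiability hypotheses are not even needed for this step.

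\emph{Main obstacle.} The main difficulty is bookkeeping rather than analysis. First, one must check that the event equivalence survives truncation at $0$ and $1$ and the extended-inverse convention, including possibly infinite endpoints of $I$. Second, one must handle the degenerate case $v_G=0$ noted in Step~1. I would state the argument with strict inequalities at the extended endpoints to make the equivalence airtight.
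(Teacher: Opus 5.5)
Your proposal is correct and follows the paper's own argument: the studentised central limit theorem for the bounded $U_i$ gives the coordinate-scale coverage, and the event equivalence under the monotone extended $G^{-1}$ transports it to $b_G$. Your caveat that the limit $1-\alpha$ requires $v_G>0$ (i.e.\ $P$ not a point mass) is a genuine omission in the stated hypotheses. In the degenerate case the coverage probability is identically $1$, so keep that assumption explicit.
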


\begin{proof}
The first assertion is the studentised central limit theorem for the bounded variables $U_i$. Since $G^{-1}$ is increasing, $p_G\in I_{p,n}(\alpha)$ if and only if $b_G\in G^{-1}(I_{p,n}(\alpha))$.
\end{proof}

\begin{remark}[Direct value-space Wald interval]
The usual delta-method interval
\[
\widehat b_G
\pm
z_{1-\alpha/2}\frac{\widehat\tau_G}{\sqrt n}
\]
is asymptotically valid. The transformed interval of Theorem~\ref{thm:fixed-wald} is preferable when $G^{-1}$ is noticeably curved, because it respects the probability boundaries and allows the corresponding value-space interval to be asymmetric.
\end{remark}

\subsection{Finite-sample distribution-free intervals}

Hoeffding's inequality applied to the bounded coordinates gives a finite-sample confidence set without estimating a variance.

\begin{corollary}[Hoeffding confidence interval]\label{cor:hoeffding-ci}
For $\alpha\in(0,1)$ let
\[
\varepsilon_{n,\alpha}
=
\sqrt{\frac{\log(2/\alpha)}{2n}}.
\]
Then
\[
\Prob\left(
|\widehat p_G-p_G|\leq\varepsilon_{n,\alpha}
\right)\geq1-\alpha.
\]
By monotonicity,
\[
C_{b,n}(\alpha)
=
G^{-1}\!\left(
[\widehat p_G-\varepsilon_{n,\alpha},
 \widehat p_G+\varepsilon_{n,\alpha}]\cap[0,1]
\right)
\]
is a finite-sample confidence interval for $b_G$ with coverage at least $1-\alpha$.
\end{corollary}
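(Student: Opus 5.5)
The plan is to apply Hoeffding's inequality directly to the bounded coordinates and then transfer the resulting probability-scale event to the value scale using only monotonicity of $G^{-1}$, exactly as in the second half of the proof of Theorem~\ref{thm:fixed-wald}.

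\textbf{Step 1 (the Hoeffding bound).} The variables $U_i=G(X_i)$ are i.i.d. and take values in $(0,1)\subset[0,1]$, with common mean $p_G$. Hoeffding's inequality for a mean of $n$ independent variables with range length $1$ gives, for every $\varepsilon>0$,
\[
\Prob\bigl(|\overline U_n-p_G|>\varepsilon\bigr)\leq 2\exp(-2n\varepsilon^2).
\]
Substituting $\varepsilon=\varepsilon_{n,\alpha}$ makes the right-hand side equal to $2\exp(-\log(2/\alpha))=\alpha$. Taking complements yields the first display of the statement, for every $n$ and every $P$ supported on $I$. No moment assumption beyond $\Prob(X\in I)=1$ is needed, because the coordinates are bounded.

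\textbf{Step 2 (staying inside $[0,1]$).} Since $p_G\in(0,1)$, the event $|\widehat p_G-p_G|\leq\varepsilon_{n,\alpha}$ is the same as
\[
p_G\in J_n:=[\widehat p_G-\varepsilon_{n,\alpha},\widehat p_G+\varepsilon_{n,\alpha}]\cap[0,1].
\]
The intersection with $[0,1]$ therefore removes nothing relevant. It also ensures that $J_n$ lies in the domain of $G^{-1}$ under the extended convention $G^{-1}(0)=\inf I$, $G^{-1}(1)=\sup I$ adopted in Theorem~\ref{thm:fixed-wald}. That convention may produce infinite endpoints, which simply means the interval is unbounded on that side.

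\textbf{Step 3 (pullback).} The extended $G^{-1}$ is strictly increasing on $[0,1]$ and maps $(0,1)$ bijectively onto $I$. Hence $J_n$ is mapped onto an interval, and $p_G\in J_n$ holds if and only if $b_G=G^{-1}(p_G)\in G^{-1}(J_n)=C_{b,n}(\alpha)$. So $\Prob(b_G\in C_{b,n}(\alpha))=\Prob(p_G\in J_n)\geq1-\alpha$.

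The only point requiring care is Step 3 at the boundary. If the endpoints of $I$ are infinite, I need to check that the pullback is still an interval containing $b_G$ exactly when $J_n$ contains $p_G$. This holds because $b_G$ itself lies in the open interval $I$, where $G^{-1}$ is a genuine order-isomorphism. Nothing else here is an obstacle.
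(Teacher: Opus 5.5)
Your proposal is correct and follows the paper's route: Hoeffding's inequality for the bounded coordinates $U_i\in(0,1)$, with $\varepsilon_{n,\alpha}$ chosen so that $2\exp(-2n\varepsilon_{n,\alpha}^2)=\alpha$, followed by the monotone pullback through the extended $G^{-1}$ used in Theorem~\ref{thm:fixed-wald}. Your checks that $p_G\in(0,1)$ makes the truncation to $[0,1]$ harmless, and that injectivity of $G^{-1}$ gives the equivalence of the two coverage events, supply the details the paper leaves implicit.
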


\begin{remark}
The two intervals answer slightly different inferential needs. The Hoeffding interval buys finite-sample coverage without learning anything about the distribution of $U$. It pays for that guarantee with width. The Wald interval uses the observed coordinate variance and is usually shorter. Reporting both makes the bargain visible instead of hiding it behind a single standard error.
\end{remark}

\subsection{Bootstrap inference}

Let $X_1^*,\ldots,X_n^*$ be an ordinary nonparametric bootstrap sample drawn from the empirical distribution, and define
\[
\widehat b_G^*
=
G^{-1}\!\left(\frac1n\sum_{i=1}^nG(X_i^*)\right).
\]

\begin{theorem}[Bootstrap validity for a fixed chart]\label{thm:fixed-bootstrap}
Under the assumptions of Proposition~\ref{prop:fixed-var}, conditionally on the data,
\[
\sqrt n\,(\widehat b_G^*-\widehat b_G)
\rightsquigarrow
\mathcal N(0,\tau_G^2)
\]
in probability. Hence the ordinary nonparametric bootstrap consistently estimates the first-order distribution of $\sqrt n(\widehat b_G-b_G)$.
\end{theorem}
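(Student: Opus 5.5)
The argument has two stages. First I establish bootstrap validity for the bounded coordinate mean. Then I transfer it to $\widehat b_G^*$ by a conditional delta method built around the random centring point $\widehat p_G$.

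For the first stage, write $U_i^*=G(X_i^*)$ and $\widehat p_G^*=n^{-1}\sum_i U_i^*$. Conditionally on the data, the $U_i^*$ are i.i.d. draws from the empirical law of $U_1,\dots,U_n$. That law has mean $\overline U_n$ and variance $\widetilde v_n=n^{-1}\sum_i(U_i-\overline U_n)^2$, and $\widetilde v_n\to v_G$ almost surely. Because $|U_i^*|\le 1$, the conditional Lindeberg condition is immediate. For every $\varepsilon>0$ we have $\E^*[(U^*-\overline U_n)^2\mathbf 1\{|U^*-\overline U_n|>\varepsilon\sqrt n\}]=0$ once $\sqrt n\,\varepsilon>1$. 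The Lindeberg--Feller theorem, applied along almost every data sequence, then gives $\sqrt n(\widehat p_G^*-\widehat p_G)\rightsquigarrow\mathcal N(0,v_G)$ conditionally, almost surely and hence in probability. If $v_G=0$ the limit is degenerate and the conclusion is trivial, so I assume $v_G>0$.

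For the second stage, let $H=G^{-1}$. Since $G$ is $C^1$ near $b_G$ with $G'(b_G)>0$, the inverse function theorem makes $H$ continuously differentiable near $p_G$, with $H'(p_G)=1/G'(b_G)$. I write
\[
\sqrt n(\widehat b_G^*-\widehat b_G)=\sqrt n\bigl(H(\widehat p_G^*)-H(\widehat p_G)\bigr)
=H'(\xi_n^*)\,\sqrt n(\widehat p_G^*-\widehat p_G),
\]
where $\xi_n^*$ lies between $\widehat p_G$ and $\widehat p_G^*$. This mean-value form is valid on the event that both points lie in a fixed neighbourhood $N$ of $p_G$ on which $H$ is $C^1$.

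The main obstacle is controlling this event and $H'(\xi_n^*)$ jointly in the conditional sense. The quantity $\xi_n^*$ depends on both the data and the resampling, so the unconditional delta method cannot be quoted directly.

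I handle it as follows. By the strong law of large numbers, $\widehat p_G\to p_G$ almost surely. Conditionally, the bootstrap Chebyshev bound gives $\Prob^*(|\widehat p_G^*-\widehat p_G|>\delta)\le\widetilde v_n/(n\delta^2)\to0$. Hence, for almost every data sequence, $\Prob^*(\xi_n^*\notin N)\to0$. On the complementary event, continuity of $H'$ on $N$ gives $H'(\xi_n^*)\to H'(p_G)$ in conditional probability.

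A conditional Slutsky argument then finishes the proof. The conditional laws of $\sqrt n(\widehat p_G^*-\widehat p_G)$ converge in bounded-Lipschitz distance to $\mathcal N(0,v_G)$, and the multiplier converges conditionally to the constant $1/G'(b_G)$. Together these yield bounded-Lipschitz convergence of the conditional law of $\sqrt n(\widehat b_G^*-\widehat b_G)$ to $\mathcal N(0,v_G/G'(b_G)^2)=\mathcal N(0,\tau_G^2)$, in probability.

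The final sentence of the theorem follows by comparison with the unconditional limit $\sqrt n(\widehat b_G-b_G)\rightsquigarrow\mathcal N(0,\tau_G^2)$. That limit comes from the same delta-method step applied to the ordinary CLT for $\overline U_n$, or from \cite{Cojocea2026}. Because the limiting normal law is continuous, Pólya's theorem upgrades the weak convergence to uniform convergence of the distribution functions. So $\sup_x|\Prob^*(\sqrt n(\widehat b_G^*-\widehat b_G)\le x)-\Prob(\sqrt n(\widehat b_G-b_G)\le x)|\to0$ in probability.
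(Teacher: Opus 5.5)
Your proposal is correct and follows the paper's route: conditional bootstrap validity for the mean of the bounded coordinates $U_i=G(X_i)$, then a bootstrap delta method through $G^{-1}$, which is $C^1$ near $p_G$ with derivative $1/G'(b_G)$. The paper cites these two standard facts. You prove them directly, using conditional Lindeberg--Feller, a mean-value expansion controlled by a bootstrap Chebyshev bound, and conditional Slutsky, and you add the P\'olya argument for the final consistency claim.
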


\begin{proof}
Conditional bootstrap validity holds for the sample mean of the bounded variable $U=G(X)$. The map $u\mapsto G^{-1}(u)$ is continuously differentiable at $p_G$, and the bootstrap delta method carries the approximation to value space.
\end{proof}

\subsection{Anchoring and non-equivariance}

\begin{proposition}[Non-equivariance of fixed-chart barycenters]\label{prop:nonequivariance}
Let
\[
T_G(P)=G^{-1}(\E_P[G(X)]).
\]
In general, $T_G$ is not location--scale equivariant. Specifically, there exist $P$, $a>0$, and $c\in\R$ such that
\[
T_G(P_{a,c})\neq aT_G(P)+c,
\]
where $P_{a,c}$ is the law of $aX+c$ for $X\sim P$.
\end{proposition}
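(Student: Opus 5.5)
The statement is existential, and it is phrased for a chart $G$ "in general". The plan is therefore to exhibit one concrete chart together with a concrete law $P$ and a concrete transformation $(a,c)$ for which the identity fails. I will take the logistic chart $G(x)=1/(1+e^{-x})$ on $I=\R$; any fixed non-affine-compatible chart would do. The cleanest choice of $P$ is a two-point law, since then $\E_P[G(X)]$ is an explicit average of two chart values and $T_G(P)$ can be evaluated in closed form.

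\textbf{Base case.} Take $P=\tfrac12\delta_{-x_0}+\tfrac12\delta_{x_0}$ with $x_0>0$. Since $G(-x)=1-G(x)$, we get $\E_P[G(X)]=\tfrac12$, hence $T_G(P)=G^{-1}(1/2)=0$. This step uses only the symmetry of the logistic chart.

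\textbf{Choice of transformation.} I will use a pure location shift, $a=1$ and $c\neq 0$. Equivariance would then require $T_G(P_{1,c})=c$. The law $P_{1,c}$ puts mass $\tfrac12$ at $c\pm x_0$, so
\[
T_G(P_{1,c})=G^{-1}\!\Bigl(\tfrac12\bigl[G(c+x_0)+G(c-x_0)\bigr]\Bigr).
\]
This equals $c$ if and only if $\tfrac12[G(c+x_0)+G(c-x_0)]=G(c)$. That is a midpoint-affinity condition for $G$ around $c$.

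\textbf{Main step.} The key point is to show that the midpoint condition fails for some $c$. For the logistic chart, $G''(c)<0$ whenever $c>0$, so $G$ is strictly concave on $(0,\infty)$. Choose $c>x_0>0$ so that $c\pm x_0$ both lie in $(0,\infty)$. Strict concavity then gives $\tfrac12[G(c+x_0)+G(c-x_0)]<G(c)$. Because $G^{-1}$ is strictly increasing, $T_G(P_{1,c})<c=aT_G(P)+c$.

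A concrete numeric instance, for example $x_0=1$ and $c=2$, can be added for transparency. I would also remark that the same argument works for any chart that is strictly concave or strictly convex on some subinterval: take $P$ as a symmetric two-point law inside that subinterval, and replace the reference point $0$ by the relevant centre. The genuine requirement is only that $G$ is not affine on any region, and such charts cannot be affine globally since they map onto a bounded interval.

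The only place needing care is making sure the curvature region contains both shifted support points. Choosing $c>x_0$ settles this, so I expect no real obstacle.
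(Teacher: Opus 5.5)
Your proof is correct, but it takes a different route from the paper.

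The paper takes $G=\Phi$ and the Gaussian family $\mathcal N(\mu,1)$. It uses the closed-form identity $\E[\Phi(X)]=\Phi(\mu/\sqrt2)$ to obtain $T_\Phi(\mathcal N(\mu,1))=\mu/\sqrt2$ exactly. So every shift $c\neq0$ moves the functional by only $c/\sqrt2$. That is a quantitative, global statement about a continuous full-support law, and it shows the anchoring explicitly.

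You use the logistic chart, a symmetric two-point law, and strict concavity of $G$ on $(0,\infty)$. This needs no special-function identity, only Jensen's inequality, and it makes plain that curvature of the chart drives the failure. The price is that you get only a strict inequality for the particular shifts $c>x_0$. Your base case also relies on the symmetry $G(-x)=1-G(x)$, which is what pins $T_G(P)$ exactly at $0$.

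That reliance on symmetry is why your closing generalisation does not hold as stated. For a chart without such a symmetry centre, a symmetric two-point law about $c_1$ does not satisfy $T_G(P)=c_1$. Concavity then places both $T_G(P_{1,c})$ and $T_G(P)+c$ below $c_1+c$, so there is no contradiction.

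A concrete case: $G(x)=e^x$ on $I=(-\infty,0)$ is a strictly convex chart, yet $T_G(P)=\log\E_P[e^X]$ is exactly shift equivariant whenever the shifted law stays in $I$. There, non-equivariance comes only from scaling. Separately, on bounded $I$ the identity chart on $(0,1)$ is affine and returns the arithmetic mean, which is equivariant whenever it is defined. So your remark that charts cannot be affine holds only for unbounded $I$.

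These examples are why the proposition has to be read, as both you and the paper read it, as existential in the chart. None of this affects your main argument.
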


\begin{proof}
Let $G=\Phi$ and $X\sim\mathcal N(\mu,1)$. The identity $\E[\Phi(X)]=\Phi(\mu/\sqrt2)$ gives
\[
T_\Phi(\mathcal N(\mu,1))=\mu/\sqrt2.
\]
Thus the response of the functional to a shift by $c\neq0$ is $c/\sqrt2$.
\end{proof}

\begin{remark}[Inference is conditional on the declared geometry]
This is the point at which seductive variance comparisons must be resisted. A narrow interval for a fixed-chart barycenter is a precise statement about the anchored target $b_G(P)$. It says nothing automatically about the quality of estimation for the arithmetic mean, the median, or an equivariant location parameter. The chart helps determine the sampling variability because it has already helped determine the estimand.
\end{remark}

\section{The intrinsic empirical surrogate}\label{sec:intrinsic}

The intrinsic chart is the most tempting choice in the framework. It is generated by the law of $X$ itself, and at the population level
\[
b_F(P)=F^{-1}(1/2),
\]
so the intrinsic barycenter is the median. Feasibility changes the story. The empirical distribution function
\[
\widehat F_n(x)=\frac1n\sum_{i=1}^n\mathbf 1_{(-\infty,x]}(X_i)
\]
is discontinuous, non-injective, and fails to cover $(0,1)$. It does not belong to the admissible chart class. Its generalised inverse still permits a natural plug-in surrogate,
\[
\widehat b_n^{\mathrm{int}}
=
\widehat F_n^{-1}\!\left(
\frac1n\sum_{i=1}^n\widehat F_n(X_i)
\right).
\]

\subsection{Exact rank collapse}

\begin{lemma}[Rank identity]\label{lem:rank}
If $F$ is continuous, then almost surely
\[
\frac1n\sum_{i=1}^n\widehat F_n(X_i)
=
\frac{n+1}{2n}.
\]
\end{lemma}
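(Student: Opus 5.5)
The plan is to reduce the average of the empirical CDF values to an average of ranks. First I would use continuity of $F$ to get an almost-sure event with no ties. For $i\neq j$ the variables $X_i$ and $X_j$ are independent, and $X_j$ has a continuous law. Conditioning on $X_i$ then gives
\[
\Prob(X_i=X_j)=\E\bigl[\Prob(X_j=X_i\mid X_i)\bigr]=0,
\]
because a continuous distribution function puts no mass at any single point. A finite union over the $\binom n2$ pairs is still a null set. So almost surely the sample values $X_1,\ldots,X_n$ are pairwise distinct.

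Next I would work on this event and let $R_i$ be the rank of $X_i$ in the sample. By definition, $n\widehat F_n(X_i)=\#\{j: X_j\le X_i\}$. With no ties, this count equals $R_i$. Since all values are distinct, the map $i\mapsto R_i$ is a bijection of $\{1,\ldots,n\}$ onto itself. Summing over $i$ therefore gives
\[
\sum_{i=1}^n\widehat F_n(X_i)=\frac1n\sum_{k=1}^nk=\frac{n+1}{2}.
\]
Dividing by $n$ yields $(n+1)/(2n)$.

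The only step needing care is the first one, showing that ties have probability zero. This is exactly where continuity of $F$ is used, and it is why the identity holds only almost surely. Without continuity, ties would make $n\widehat F_n(X_i)$ the \emph{largest} rank within a tied block, and the sum would exceed $n(n+1)/2$. The rest is bookkeeping. I would also note that the identity is deterministic given distinct values, so it holds for every $n$ and not only asymptotically.
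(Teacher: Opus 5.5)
Your proof is correct and follows essentially the same route as the paper: with no ties almost surely, $\widehat F_n(X_i)=R_i/n$, the ranks form a permutation of $\{1,\ldots,n\}$, and the average is $n^{-2}\sum_{r=1}^n r=(n+1)/(2n)$. You also justify the no-ties step explicitly, using independence and the absence of atoms; the paper simply asserts that step.
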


\begin{proof}
Almost surely the observations are distinct and $\widehat F_n(X_i)=R_i/n$, where the ranks form a permutation of $1,\ldots,n$. Hence
\[
\frac1n\sum_{i=1}^n\widehat F_n(X_i)
=
\frac1{n^2}\sum_{r=1}^nr
=
\frac{n+1}{2n}.
\]
\end{proof}

\begin{theorem}[Degeneracy of the fully empirical intrinsic surrogate]\label{thm:collapse}
Assume that $F$ is continuous. Then, almost surely,
\[
\widehat b_n^{\mathrm{int}}
=
X_{(\lceil(n+1)/2\rceil)}.
\]
Thus the surrogate is the sample median for odd $n$ and the upper middle order statistic for even $n$.
\end{theorem}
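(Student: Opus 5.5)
Combining Lemma~\ref{lem:rank} with the definition of the generalised inverse reduces the claim to an order-statistic computation. We fix the left-continuous convention $\widehat F_n^{-1}(u)=\inf\{x:\widehat F_n(x)\ge u\}$, which is the one implicit in the statement. We also work on the almost-sure event where the observations are distinct, which has probability one because $F$ is continuous. On that event Lemma~\ref{lem:rank} gives $\frac1n\sum_i\widehat F_n(X_i)=\frac{n+1}{2n}$, so
\[
\widehat b_n^{\mathrm{int}}=\widehat F_n^{-1}\!\left(\frac{n+1}{2n}\right).
\]

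The next step is the standard description of the empirical quantile function. On the distinct-values event, $\widehat F_n$ is a step function equal to $k/n$ on $[X_{(k)},X_{(k+1)})$, with the conventions $X_{(0)}=-\infty$ and $X_{(n+1)}=+\infty$. Hence for $u\in(0,1]$ we have $\widehat F_n^{-1}(u)=X_{(k)}$, where $k$ is the smallest integer with $k/n\ge u$, that is $k=\lceil nu\rceil$. To check this we note two facts. The infimum is attained at a jump point because $\widehat F_n$ is right-continuous. Also $\widehat F_n(x)\ge u$ holds exactly when $x\ge X_{(\lceil nu\rceil)}$.

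Substituting $u=(n+1)/(2n)\in(0,1]$ gives $nu=(n+1)/2$, so $\widehat b_n^{\mathrm{int}}=X_{(\lceil (n+1)/2\rceil)}$. The parity reading then follows directly. For odd $n=2m+1$ the index is $m+1$, which gives the sample median. For even $n=2m$ we have $(n+1)/2=m+\tfrac12$, so the ceiling is $m+1$, the upper of the two middle order statistics.

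There is no real obstacle here. The only delicate point is the convention for the generalised inverse. If a right-continuous inverse $\sup\{x:\widehat F_n(x)\le u\}$ were used, the index could shift when $nu$ is an integer. Since $(n+1)/2$ is an integer for odd $n$, I will state the left-continuous convention explicitly and note that it produces the stated index in both parities.
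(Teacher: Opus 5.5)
Your proposal is correct and follows the same route as the paper's proof. Lemma~\ref{lem:rank} fixes the argument at $(n+1)/(2n)$, and the identity $\widehat F_n^{-1}(u)=X_{(\lceil nu\rceil)}$ finishes the argument; your explicit check of that identity under the left-continuous convention, and your note that a right-continuous inverse would shift the index for odd $n$, fill in detail the paper leaves implicit.
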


\begin{proof}
By Lemma~\ref{lem:rank}, the argument of the generalised inverse is $(n+1)/(2n)$. Since $\widehat F_n^{-1}(u)=X_{(\lceil nu\rceil)}$, the result follows.
\end{proof}

\begin{remark}[What the rank identity removes]
The collapse is exact and finite-sample. The inner average no longer contains information about the shape of $F$, because the empirical CDF has already replaced the observations by the deterministic set of ranks $1/n,\ldots,n/n$. All remaining randomness enters through the empirical quantile map. The oracle averaging mechanism has disappeared before any asymptotic argument begins.
\end{remark}

\begin{corollary}[Asymptotic distribution]\label{cor:intrinsic-asymptotic}
Suppose $F$ is strictly increasing and has a density $f$ that is continuous and positive at the median $m=F^{-1}(1/2)$. Then
\[
\sqrt n\,(\widehat b_n^{\mathrm{int}}-m)
\rightsquigarrow
\mathcal N\!\left(0,\frac1{4f(m)^2}\right).
\]
\end{corollary}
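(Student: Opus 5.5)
By Theorem~\ref{thm:collapse}, almost surely $\widehat b_n^{\mathrm{int}}=X_{(k_n)}$ with $k_n=\lceil(n+1)/2\rceil$. The statement is therefore the asymptotic normality of a central order statistic. The plan is to prove it by hand through the standard reduction to binomial counts, which needs only continuity and positivity of $f$ at $m$, not global smoothness. The key observation is that $k_n/n=1/2+O(1/n)$, so $\sqrt n\,(k_n/n-1/2)\to0$. The index shift between the odd and even cases is therefore asymptotically negligible.

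\emph{Step 1 (duality).} Fix $t\in\R$ and set $x_n=m+t/\sqrt n$. Since $X_{(k_n)}\le x_n$ holds exactly when at least $k_n$ observations are at most $x_n$,
\[
\Prob\bigl(\sqrt n(X_{(k_n)}-m)\le t\bigr)=\Prob\bigl(S_n\ge k_n\bigr),
\qquad S_n=\sum_{i=1}^n\mathbf 1\{X_i\le x_n\}\sim\mathrm{Bin}(n,F(x_n)).
\]

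\emph{Step 2 (local expansion of $F$).} Because $F$ is differentiable at $m$ with $F'(m)=f(m)>0$ (continuity of $f$ near $m$ gives this), we have $F(x_n)=1/2+f(m)t/\sqrt n+o(n^{-1/2})$. Put $p_n=F(x_n)\to1/2$, so $np_n(1-p_n)\sim n/4$.

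\emph{Step 3 (triangular-array CLT).} The summands are bounded Bernoulli variables with variance bounded away from zero, so the Lindeberg condition holds trivially. Hence $Z_n=(S_n-np_n)/\sqrt{np_n(1-p_n)}\rightsquigarrow\mathcal N(0,1)$. Rewrite the event $S_n\ge k_n$ as $Z_n\ge c_n$, where
\[
c_n=\frac{k_n-np_n}{\sqrt{np_n(1-p_n)}}
=\frac{\sqrt n(k_n/n-1/2)-f(m)t+o(1)}{\tfrac12+o(1)}\longrightarrow-2f(m)t.
\]

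\emph{Step 4 (conclusion).} The limit law is continuous, so Pólya's theorem gives uniform convergence of distribution functions. Therefore $\Prob(Z_n\ge c_n)\to1-\Phi(-2f(m)t)=\Phi(2f(m)t)$. This is the CDF of $\mathcal N(0,1/(4f(m)^2))$ evaluated at $t$, and pointwise convergence for every $t$ yields the claimed weak convergence.

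The main obstacle is Step 3. The success probability $p_n$ varies with $n$, so a fixed-$p$ CLT cannot be quoted directly; one needs the Lindeberg (or Berry--Esseen) version for the triangular array, combined with a moving threshold $c_n$. Once uniform convergence is secured, the rest is bookkeeping. An alternative route would use the Bahadur representation $X_{(k_n)}=m+(1/2-\widehat F_n(m))/f(m)+o_{\Prob}(n^{-1/2})$ together with the ordinary CLT for $\widehat F_n(m)$. That route usually assumes more regularity, so the direct binomial argument above is preferable under the stated hypotheses.
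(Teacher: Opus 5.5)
Your proof is correct and follows the paper's route. The paper gives no separate proof: it obtains the corollary from Theorem~\ref{thm:collapse}, which identifies $\widehat b_n^{\mathrm{int}}$ almost surely with $X_{(\lceil (n+1)/2\rceil)}$, together with the classical asymptotic normality of the sample median. Your binomial-duality computation, including the check that the $O(1/n)$ index shift is negligible, is the standard proof of that classical fact written out.

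A minor remark on the side comments. Continuity of $f$ \emph{at} $m$, rather than near $m$, already gives $F'(m)=f(m)$. The in-probability Bahadur representation also holds under just differentiability of $F$ at $m$ with $F'(m)>0$, so the alternative route you mention would not actually need extra regularity.
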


\subsection{Frozen oracle versus self-induced functional}

The notation $G=F$ can conceal a decisive change of functional. Freezing $F$ while the law is perturbed produces one derivative. Recomputing the CDF after perturbation produces another. The distinction changes the influence curve and the attainable variance.

\begin{definition}[Frozen oracle functional]
For a fixed baseline law $P$ with CDF $F_P$, define
\[
T_P^{\mathrm{fr}}(Q)
=
F_P^{-1}\!\left(\E_Q[F_P(X)]\right).
\]
Only the averaging law $Q$ varies. The chart remains frozen at $F_P$.
\end{definition}

\begin{definition}[Self-induced intrinsic functional]
Define
\[
T^{\mathrm{int}}(Q)
=
F_Q^{-1}\!\left(\E_Q[F_Q(X)]\right)
=
F_Q^{-1}(1/2).
\]
Thus $T^{\mathrm{int}}$ is exactly the median functional.
\end{definition}

\begin{proposition}[Two distinct influence functions]\label{prop:oracle-self-if}
Let $P$ have continuous CDF $F$ and density $f$ positive at its median $m$. Then
\[
\IF(x;T_P^{\mathrm{fr}},P)
=
\frac{F(x)-1/2}{f(m)},
\qquad
\E_P\bigl[\IF(X;T_P^{\mathrm{fr}},P)^2\bigr]
=
\frac1{12f(m)^2},
\]
whereas
\[
\IF(x;T^{\mathrm{int}},P)
=
\frac{1/2-\mathbf 1_{(-\infty,m]}(x)}{f(m)},
\qquad
\E_P\bigl[\IF(X;T^{\mathrm{int}},P)^2\bigr]
=
\frac1{4f(m)^2}.
\]
\end{proposition}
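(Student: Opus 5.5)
Both influence functions will be computed as Gateaux derivatives along the contamination path $Q_\varepsilon=(1-\varepsilon)P+\varepsilon\delta_x$. Each second moment then follows from the fact that $F(X)$ is uniform on $(0,1)$ under a continuous $F$.

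For the frozen oracle functional, the chart $F_P=F$ is held fixed, so only the inner average moves. By linearity,
\[
\E_{Q_\varepsilon}[F(X)]=(1-\varepsilon)\tfrac12+\varepsilon F(x),
\]
using $\E_P[F(X)]=1/2$ by the probability integral transform. Since $T_P^{\mathrm{fr}}(Q_\varepsilon)=F^{-1}(\E_{Q_\varepsilon}[F(X)])$, I differentiate at $\varepsilon=0$ with the chain rule. The inverse function theorem gives $(F^{-1})'(1/2)=1/f(m)$, because $F$ is differentiable at $m$ with $f(m)>0$. The result is
\[
\IF(x;T_P^{\mathrm{fr}},P)=\frac{F(x)-1/2}{f(m)}.
\]
Its second moment is $\Var(U)/f(m)^2$ with $U=F(X)\sim\mathrm{Unif}(0,1)$. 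Since $\Var(U)=1/12$, this equals $1/(12f(m)^2)$.

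For the self-induced functional, the definition already gives $T^{\mathrm{int}}(Q)=F_Q^{-1}(1/2)$, which is the median functional. The step $\E_Q[F_Q(X)]=1/2$ needs $F_Q$ continuous. At a contaminated $Q_\varepsilon$ this fails because of the atom at $x$, so I will not use it there and will work with the median directly. Let $m_\varepsilon$ be the median of $Q_\varepsilon$. It satisfies
\[
(1-\varepsilon)F(m_\varepsilon)+\varepsilon\mathbf 1_{\{x\le m_\varepsilon\}}=\tfrac12
\]
for small $\varepsilon$ and $x\neq m$. Consider first $x>m$. Then $m_\varepsilon<x$ for small $\varepsilon$, so $F(m_\varepsilon)=\frac{1}{2(1-\varepsilon)}$. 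Hence $m_\varepsilon-m=\frac{\varepsilon}{2f(m)}+o(\varepsilon)$, and the derivative is $1/(2f(m))$. Symmetrically, for $x<m$ the derivative is $-1/(2f(m))$. Both cases combine into
\[
\IF(x;T^{\mathrm{int}},P)=\frac{1/2-\mathbf 1_{(-\infty,m]}(x)}{f(m)}.
\]
The value at $x=m$ is fixed by convention and has $P$-measure zero. The square of this function is identically $1/(4f(m)^2)$, so its expectation is $1/(4f(m)^2)$.

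The main obstacle is the differentiation step for the median. It needs $F$ differentiable at $m$ with derivative $f(m)$, plus uniqueness of the median near $m$ so that $m_\varepsilon\to m$. Positivity of $f$ at $m$ gives local strict monotonicity, and that supplies both. I would state explicitly that $f$ is taken to be the derivative of $F$ at $m$, or assume continuity of $f$ at $m$ as in Corollary~\ref{cor:intrinsic-asymptotic}. I would also remark that the second moment $1/(4f(m)^2)$ matches the asymptotic variance in that corollary. This confirms that the feasible surrogate tracks the self-induced functional, not the frozen oracle.
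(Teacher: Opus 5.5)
Your proposal is correct and follows the paper's proof. You differentiate the frozen functional along the contamination path, use the median influence function for $T^{\mathrm{int}}$, and read off the second moments from $F(X)\sim\mathrm{Unif}(0,1)$ and the constant square $1/(4f(m)^2)$. The paper cites the median influence function as classical, so your explicit contamination computation only fills in details it leaves implicit: that the value at $x=m$ is a $P$-null convention, and that differentiability of $F$ at $m$ is what is actually used.
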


\begin{proof}
The first formula follows by differentiating the frozen-chart functional. The second is the classical influence function of the median. Since $F(X)\sim\mathrm{Unif}(0,1)$, the frozen-oracle variance is\par\smallskip
\[
\frac{\Var(F(X))}{f(m)^2}=\frac1{12f(m)^2}.
\]
The median influence function has constant squared magnitude $1/(4f(m)^2)$ almost surely.
\end{proof}

\begin{remark}[Meaning of the oracle variance]
The number $1/(12f(m)^2)$ is mathematically real, but it belongs to an oracle that has already been told the unknown distribution. The genuine self-induced functional is the median and carries the median variance. In a regular parametric location model, a regular estimator cannot reproduce the smaller oracle variance uniformly in a neighbourhood when that variance lies below the information bound. Pointwise superefficiency remains a separate possibility. The oracle value should be read as a geometric benchmark, never as a variance promised by a feasible procedure.
\end{remark}

\subsection{Median inference inherited by the surrogate}

\begin{proposition}[Exact order-statistic interval]\label{prop:median-exact-ci}
Let $B\sim\operatorname{Bin}(n,1/2)$ and choose an integer $k$ such that
\[
\Prob(k\leq B\leq n-k)\geq1-\alpha.
\]
Then
\[
\bigl[X_{(k)},X_{(n-k+1)}\bigr]
\]
is a distribution-free confidence interval for the intrinsic median $m$ with coverage at least $1-\alpha$.
\end{proposition}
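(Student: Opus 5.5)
The plan is to use the classical binomial argument for order-statistic intervals. I will reduce coverage of $m$ to a statement about the count
\[
B_n=\#\{i: X_i\le m\}=\sum_{i=1}^n\mathbf 1_{(-\infty,m]}(X_i).
\]
In the setting of this section $F$ is continuous, so $F(m)=1/2$. The indicators are then i.i.d.\ Bernoulli$(1/2)$, and $B_n\sim\operatorname{Bin}(n,1/2)$ exactly for every $n$. Continuity also gives $\Prob(X_i=m)=0$, so almost surely no observation equals $m$. Ties among the observations are irrelevant to the event identities below, but excluding $X_i=m$ lets me move freely between strict and weak inequalities.

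The key step is an event identity for the order statistics. For the left endpoint,
\[
X_{(k)}\le m \iff \#\{i:X_i\le m\}\ge k \iff B_n\ge k.
\]
For the right endpoint, $X_{(n-k+1)}\ge m$ holds iff at least $k$ observations are $\ge m$, i.e.\ $\#\{i:X_i\ge m\}\ge k$. Since almost surely $X_i\ne m$ for all $i$, we have $\#\{i:X_i\ge m\}=n-B_n$, so this is equivalent to $B_n\le n-k$. Combining the two,
\[
\Prob\bigl(X_{(k)}\le m\le X_{(n-k+1)}\bigr)=\Prob(k\le B_n\le n-k)\ge 1-\alpha,
\]
by the choice of $k$.

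Two details need care. First, $k$ must satisfy $1\le k\le n-k+1$ so that the interval is well defined. If no such $k$ with $k\ge1$ meets the coverage requirement, the statement is vacuous. Implicitly $k\ge1$, and $k\le n/2$ follows from $\Prob(k\le B\le n-k)>0$. Second, the argument never uses the shape of $F$, only $F(m)=1/2$, so the interval is distribution-free over all continuous $F$.

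The main obstacle is the boundary bookkeeping: getting exactly $n-B_n$ for the count of observations $\ge m$. This is where continuity is genuinely needed. Without it, atoms at $m$ would break the binomial law of $B_n$ and the identity. A conservative inequality would survive even then, but the statement as given relies on the continuous case. A brief remark can tie this back to Theorem~\ref{thm:collapse}: since the surrogate $\widehat b_n^{\mathrm{int}}$ is a middle order statistic, this interval is its natural exact companion.
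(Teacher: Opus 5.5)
Your proof is correct: the identities $\{X_{(k)}\le m\}=\{B_n\ge k\}$ and, using $\Prob(X_i=m)=0$, $\{X_{(n-k+1)}\ge m\}=\{B_n\le n-k\}$ give the coverage bound directly. Reading continuity of $F$ from the section's standing assumptions is correct, since the proposition does not restate it. The paper states this proposition without proof and relies on exactly this classical binomial order-statistic argument, so your write-up supplies the reasoning the paper omits rather than an alternative route.
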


A density-based asymptotic standard error is
\[
\widehat{\operatorname{se}}(\widehat b_n^{\mathrm{int}})
=
\frac1{2\sqrt n\,\widehat f(\widehat b_n^{\mathrm{int}})},
\]
provided $\widehat f$ is consistent at $m$. Under the same local smoothness condition, the ordinary nonparametric bootstrap for the sample median is valid and applies directly to $\widehat b_n^{\mathrm{int}}$.

\section{Inference with estimated charts}\label{sec:estimated}

The estimated chart is where probability geometry becomes fully inferential. A fixed chart specifies in advance which observation is mapped to the probability midpoint $1/2$ and how rapidly coordinates change around that location. When this location and scale calibration is not prescribed by the application, keeping it fixed makes the resulting barycenter depend on arbitrary measurement units. The adaptive construction therefore retains a declared chart shape and learns only its location and scale. The empirical intrinsic surrogate instead estimates the entire CDF and uniformises the observed ranks. The adaptive construction preserves a genuine averaging step while restoring location--scale equivariance.

\subsection{Adaptive probability barycenter}

Let $G_0$ be a fixed chart on $\R$ with density $g_0=G_0'$, and consider
\[
G_\theta(x)
=
G_0\!\left(\frac{x-\mu}{\sigma}\right),
\qquad
\theta=(\mu,\sigma)\in\R\times(0,\infty).
\]
Let $\widehat\theta_n=(\widehat\mu_n,\widehat\sigma_n)$ be preliminary estimators and define
\[
\widehat m_n
=
\frac1n\sum_{i=1}^nG_{\widehat\theta_n}(X_i),
\qquad
\widehat b_n^{\mathrm{ad}}
=
G_{\widehat\theta_n}^{-1}(\widehat m_n).
\]
Equivalently,
\[
\widehat b_n^{\mathrm{ad}}
=
\widehat\mu_n
+
\widehat\sigma_nG_0^{-1}\!\left(
\frac1n\sum_{i=1}^n
G_0\!\left(\frac{X_i-\widehat\mu_n}{\widehat\sigma_n}\right)
\right).
\]

\begin{proposition}[Location--scale equivariance]\label{prop:adaptive-equivariance}
If $\widehat\mu_n$ and $\widehat\sigma_n$ are location--scale equivariant, then
\[
\widehat b_n^{\mathrm{ad}}(aX_1+c,\ldots,aX_n+c)
=
a\widehat b_n^{\mathrm{ad}}(X_1,\ldots,X_n)+c
\]
for all $a>0$ and $c\in\R$.
\end{proposition}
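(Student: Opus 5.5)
The plan is a direct substitution argument, using the explicit representation
\[
\widehat b_n^{\mathrm{ad}}=\widehat\mu_n+\widehat\sigma_nG_0^{-1}\!\left(\frac1n\sum_{i=1}^nG_0\!\left(\frac{X_i-\widehat\mu_n}{\widehat\sigma_n}\right)\right).
\]
First we make the hypothesis precise. Location--scale equivariance of the preliminary estimators means that for the transformed sample $Y_i=aX_i+c$ with $a>0$ we have
\[
\widehat\mu_n(Y)=a\widehat\mu_n(X)+c,\qquad \widehat\sigma_n(Y)=a\widehat\sigma_n(X).
\]
The second identity is scale equivariance with invariance under shifts, which is the natural requirement for a scale estimator.

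The key step is to show that the standardised residuals are invariant. For each $i$,
\[
\frac{Y_i-\widehat\mu_n(Y)}{\widehat\sigma_n(Y)}
=\frac{aX_i+c-a\widehat\mu_n(X)-c}{a\widehat\sigma_n(X)}
=\frac{X_i-\widehat\mu_n(X)}{\widehat\sigma_n(X)}.
\]
Because the residuals coincide, the inner average $\widehat m_n$ is literally the same number for both samples. Applying $G_0^{-1}$ therefore also gives the same quantity $q$. This number lies in $(0,1)$ since it is an average of values of $G_0$ on $\R$, so $G_0^{-1}$ is well defined there.

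Substituting back into the representation gives
\[
\widehat b_n^{\mathrm{ad}}(Y)=a\widehat\mu_n(X)+c+a\widehat\sigma_n(X)\,q=a\,\widehat b_n^{\mathrm{ad}}(X)+c,
\]
which is the claim.

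There is no real obstacle. The only point needing care is confirming that the two expressions for $\widehat b_n^{\mathrm{ad}}$ agree, i.e.\ that $G_\theta^{-1}(u)=\mu+\sigma G_0^{-1}(u)$. This follows immediately from solving $G_0((x-\mu)/\sigma)=u$ for $x$.
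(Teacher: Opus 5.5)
Your proposal is correct and follows essentially the same route as the paper: use equivariance of $\widehat\mu_n,\widehat\sigma_n$ to show the standardised observations $(Y_i-\widehat\mu_n(Y))/\widehat\sigma_n(Y)$ are unchanged, then substitute into the explicit formula for $\widehat b_n^{\mathrm{ad}}$. Your extra checks, that the inner average lies in $(0,1)$ and that $G_\theta^{-1}(u)=\mu+\sigma G_0^{-1}(u)$, are correct details that the paper leaves implicit.
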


\begin{proof}
Let $Y_i=aX_i+c$ with $a>0$. Equivariance of the preliminary estimators gives
\[
\widehat\mu_n(Y)=a\widehat\mu_n(X)+c,
\qquad
\widehat\sigma_n(Y)=a\widehat\sigma_n(X).
\]
Consequently, the standardised observations are unchanged:
\[
\frac{Y_i-\widehat\mu_n(Y)}{\widehat\sigma_n(Y)}
=
\frac{X_i-\widehat\mu_n(X)}{\widehat\sigma_n(X)}.
\]
Substitution into the explicit formula for $\widehat b_n^{\mathrm{ad}}$ proves the result.
\end{proof}

\subsection{Asymptotically linear expansion}

Let $\theta_0$ be the probability limit of $\widehat\theta_n$, and define
\[
m(\theta)=\E[G_\theta(X)],
\qquad
b(\theta)=G_\theta^{-1}(m(\theta)),
\qquad
b_0=b(\theta_0).
\]

\begin{assumption}[Estimated-chart regularity]\label{ass:adaptive}
The following conditions hold.
\begin{enumerate}[label=(\roman*)]
\item $\widehat\theta_n$ is asymptotically linear:
\[
\sqrt n(\widehat\theta_n-\theta_0)
=
\frac1{\sqrt n}\sum_{i=1}^n\psi_\theta(X_i)+o_{\Prob}(1),
\]
where $\E[\psi_\theta(X)]=0$ and $\E\|\psi_\theta(X)\|^2<\infty$.
\item The class $\{G_\theta:\theta\in\Theta_0\}$ is $P$-Donsker on a neighbourhood $\Theta_0$ of $\theta_0$, and $\theta\mapsto G_\theta$ is differentiable at $\theta_0$ in $L^2(P)$.
\item $m(\theta)$ is differentiable at $\theta_0$, $G_{\theta_0}$ is differentiable at $b_0$, and $G'_{\theta_0}(b_0)>0$.
\end{enumerate}
\end{assumption}

\begin{remark}[Location--scale sufficient conditions]
Assumption~\ref{ass:adaptive} states the empirical-process requirement directly, because boundedness by itself does not justify stochastic equicontinuity. For the location--scale family above, the Donsker property follows from the finite-dimensional monotone structure of the CDF class under standard measurability conditions. Differentiation under the expectation also requires an integrable envelope for
\[
\nabla_\theta G_\theta(x)
=
\left(
-\frac1\sigma g_0(z),
-\frac z\sigma g_0(z)
\right),
\qquad z=\frac{x-\mu}{\sigma}.
\]
Boundedness of $g_0$ alone does not control the scale derivative. The term $|z|g_0(z)$ also needs an integrable envelope.
\end{remark}

\begin{theorem}[Adaptive asymptotic linearity]\label{thm:adaptive-al}
Under Assumption~\ref{ass:adaptive},
\[
\sqrt n(\widehat b_n^{\mathrm{ad}}-b_0)
=
\frac1{\sqrt n}\sum_{i=1}^n\IF_{\mathrm{ad}}(X_i)+o_{\Prob}(1),
\]
where
\[
\IF_{\mathrm{ad}}(x)
=
\frac{
G_{\theta_0}(x)-m(\theta_0)
+
\Delta^\top\psi_\theta(x)
}
{G'_{\theta_0}(b_0)},
\]
and
\[
\Delta
=
\nabla_\theta m(\theta_0)
-
\nabla_\theta G_\theta(b_0)\big|_{\theta_0}.
\]
It follows that
\[
\sqrt n(\widehat b_n^{\mathrm{ad}}-b_0)
\rightsquigarrow
\mathcal N(0,\tau_{\mathrm{ad}}^2),
\qquad
\tau_{\mathrm{ad}}^2=\E[\IF_{\mathrm{ad}}(X)^2].
\]
\end{theorem}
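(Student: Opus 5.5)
The plan is to prove the expansion in two stages: first for the coordinate mean $\widehat m_n$, then for the barycenter by inverting the chart. Stage one is a standard two-step (plug-in) empirical process argument. Writing $\mathbb P_n$ for the empirical measure and $\mathbb G_n=\sqrt n(\mathbb P_n-P)$, decompose
\[
\sqrt n\bigl(\widehat m_n-m(\theta_0)\bigr)
=
\mathbb G_n G_{\theta_0}
+\mathbb G_n\bigl(G_{\widehat\theta_n}-G_{\theta_0}\bigr)
+\sqrt n\bigl(m(\widehat\theta_n)-m(\theta_0)\bigr).
\]
Assumption~\ref{ass:adaptive}(i) gives $\widehat\theta_n\to\theta_0$ in probability. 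The $L^2(P)$-differentiability in (ii) then gives $\|G_{\widehat\theta_n}-G_{\theta_0}\|_{L^2(P)}\to0$ in probability. The Donsker property on $\Theta_0$ yields asymptotic equicontinuity of $\mathbb G_n$, so the middle term is $o_{\Prob}(1)$. For the last term I use differentiability of $m$ from (iii) together with (i):
\[
\sqrt n\bigl(m(\widehat\theta_n)-m(\theta_0)\bigr)=\nabla_\theta m(\theta_0)^\top\frac1{\sqrt n}\sum_i\psi_\theta(X_i)+o_{\Prob}(1).
\]

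Stage two handles the barycenter. Define $H(u,\theta)=G_\theta^{-1}(u)$, so that $\widehat b_n^{\mathrm{ad}}=H(\widehat m_n,\widehat\theta_n)$ and $b_0=H(m(\theta_0),\theta_0)$. The key identity is
\[
G_{\widehat\theta_n}\bigl(\widehat b_n^{\mathrm{ad}}\bigr)=\widehat m_n .
\]
Subtracting $G_{\theta_0}(b_0)=m(\theta_0)$ gives
\[
\bigl[G_{\widehat\theta_n}(\widehat b_n^{\mathrm{ad}})-G_{\widehat\theta_n}(b_0)\bigr]
+\bigl[G_{\widehat\theta_n}(b_0)-G_{\theta_0}(b_0)\bigr]
=\widehat m_n-m(\theta_0).
\]
In the location--scale family this is explicit: $\widehat b_n^{\mathrm{ad}}=\widehat\mu_n+\widehat\sigma_nG_0^{-1}(\widehat m_n)$. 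The map $(u,\mu,\sigma)\mapsto\mu+\sigma G_0^{-1}(u)$ is differentiable at $(m(\theta_0),\theta_0)$ whenever $g_0$ is positive and continuous at $z_0=(b_0-\mu_0)/\sigma_0$, so the ordinary delta method applies.

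Its partial derivatives are:
\begin{itemize}
\item in $u$: $\sigma_0/g_0(z_0)=1/G'_{\theta_0}(b_0)$;
\item in $\theta$: $(1,z_0)$, which equals $-\nabla_\theta G_\theta(b_0)|_{\theta_0}/G'_{\theta_0}(b_0)$ by the gradient formula in the preceding Remark.
\end{itemize}
Combining these with stage one gives numerator $G_{\theta_0}(x)-m(\theta_0)+\bigl(\nabla_\theta m(\theta_0)-\nabla_\theta G_\theta(b_0)|_{\theta_0}\bigr)^\top\psi_\theta(x)$ over $G'_{\theta_0}(b_0)$, which is exactly $\IF_{\mathrm{ad}}$ with the stated $\Delta$. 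The central limit theorem for the i.i.d.\ mean-zero, square-integrable summands then gives the normal limit with variance $\E[\IF_{\mathrm{ad}}(X)^2]$. The summands are square-integrable because $G_{\theta_0}$ is bounded and $\E\|\psi_\theta\|^2<\infty$.

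The main obstacle is stage two for a general parametric chart, where (iii) only assumes differentiability of $G_{\theta_0}$ at $b_0$, not joint regularity of $(x,\theta)\mapsto G_\theta(x)$. My first attempt is to prove $\widehat b_n^{\mathrm{ad}}\to b_0$ by monotonicity: $G_{\widehat\theta_n}(b_0\pm\varepsilon)\to G_{\theta_0}(b_0\pm\varepsilon)$, which straddle $m(\theta_0)$. I would then need the remainder $G_{\widehat\theta_n}(\widehat b)-G_{\widehat\theta_n}(b_0)-G'_{\theta_0}(b_0)(\widehat b-b_0)=o_{\Prob}(|\widehat b-b_0|)$, which requires local uniformity of $\partial_xG_\theta$ near $(b_0,\theta_0)$ and pointwise differentiability of $\theta\mapsto G_\theta(b_0)$. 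Both are automatic in the location--scale family, where everything reduces to $G_0$ and $g_0$ being continuous at $z_0$. I would state the theorem's proof for that family and note that the same argument extends under this mild joint-continuity condition.
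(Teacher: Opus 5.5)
Your proof is correct and follows the same route as the paper. It uses the same expansion of $\widehat m_n$ (Donsker asymptotic equicontinuity for the cross term, differentiability of $m$ and linearity of $\widehat\theta_n$ for the calibration term), then the delta method through $(w,\theta)\mapsto G_\theta^{-1}(w)$ with the same partial derivatives $1/G'_{\theta_0}(b_0)$ and $-\nabla_\theta G_\theta(b_0)|_{\theta_0}/G'_{\theta_0}(b_0)$. Computing these from the explicit form $\mu+\sigma G_0^{-1}(u)$, rather than by the paper's implicit differentiation of $G_\theta(\Phi(w,\theta))=w$, is a small gain: it gives joint differentiability of $\Phi$ from (iii) alone. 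Continuity of $g_0$ at $z_0$ is not even needed, because $G_0^{-1}$ is differentiable at $G_0(z_0)$ once $G_0$ is differentiable at $z_0$ with $g_0(z_0)>0$.
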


\begin{proof}
Let $W_n(\theta)=n^{-1}\sum_iG_\theta(X_i)$ and $\Phi(w,\theta)=G_\theta^{-1}(w)$. Stochastic equicontinuity and differentiability give
\[
\begin{aligned}
\sqrt n\{W_n(\widehat\theta_n)-m(\theta_0)\}
={}&
\frac1{\sqrt n}\sum_{i=1}^n
\{G_{\theta_0}(X_i)-m(\theta_0)\}\\
&+\nabla_\theta m(\theta_0)^\top
\sqrt n(\widehat\theta_n-\theta_0)
+o_{\Prob}(1).
\end{aligned}
\]
Differentiating $G_\theta(\Phi(w,\theta))=w$ gives
\[
\partial_w\Phi=\frac1{G'_{\theta_0}(b_0)},
\qquad
\nabla_\theta\Phi
=-\frac{\nabla_\theta G_\theta(b_0)|_{\theta_0}}{G'_{\theta_0}(b_0)}.
\]
The delta method and the asymptotic linearity of $\widehat\theta_n$ yield the stated expansion.
\end{proof}

\begin{corollary}[Calibration orthogonality]\label{cor:orthogonality}
If $\Delta=0$, chart calibration has no first-order effect and
\[
\tau_{\mathrm{ad}}^2
=
\frac{\Var(G_{\theta_0}(X))}{[G'_{\theta_0}(b_0)]^2}.
\]
\end{corollary}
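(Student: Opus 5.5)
The corollary follows by direct substitution into Theorem~\ref{thm:adaptive-al}, so the plan is to start from the influence function $\IF_{\mathrm{ad}}$ given there. When $\Delta=0$, the calibration term $\Delta^\top\psi_\theta(x)$ vanishes identically. The influence function therefore reduces to
\[
\IF_{\mathrm{ad}}(x)=\frac{G_{\theta_0}(x)-m(\theta_0)}{G'_{\theta_0}(b_0)}.
\]
This is exactly the fixed-chart influence function for the frozen chart $G_{\theta_0}$. It no longer depends on the preliminary estimator $\widehat\theta_n$ at all.

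Next I compute $\tau_{\mathrm{ad}}^2=\E[\IF_{\mathrm{ad}}(X)^2]$. By definition $m(\theta_0)=\E[G_{\theta_0}(X)]$, so the numerator is centred. Its second moment is therefore $\Var(G_{\theta_0}(X))$, which is finite because $G_{\theta_0}(X)\in(0,1)$. The denominator is the deterministic positive constant $G'_{\theta_0}(b_0)>0$, guaranteed by Assumption~\ref{ass:adaptive}(iii), and it factors out as $[G'_{\theta_0}(b_0)]^{-2}$. This gives the displayed formula.

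To justify the phrase "no first-order effect", I note that the expansion of Theorem~\ref{thm:adaptive-al} now coincides, up to $o_{\Prob}(1)$, with the expansion of the oracle estimator $G_{\theta_0}^{-1}(n^{-1}\sum_i G_{\theta_0}(X_i))$. That oracle expansion is the fixed-chart delta method behind Proposition~\ref{prop:fixed-var}, applied with $G=G_{\theta_0}$. Consequently the two estimators differ by $o_{\Prob}(n^{-1/2})$.

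There is no real obstacle here. The only care needed is to state that $\Delta$ is evaluated at the probability limit $\theta_0$, and that the hypothesis $\Delta=0$ is a population condition rather than one on the estimator.
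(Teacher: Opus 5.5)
Your proposal is correct and follows the same route as the paper. The paper states the corollary without a separate proof, as an immediate reading of Theorem~\ref{thm:adaptive-al}: setting $\Delta=0$ removes the calibration term from $\IF_{\mathrm{ad}}$, and because the numerator is centred, its second moment is $\Var(G_{\theta_0}(X))/[G'_{\theta_0}(b_0)]^2$. Your extra comparison with the frozen-chart oracle estimator $G_{\theta_0}^{-1}\bigl(n^{-1}\sum_i G_{\theta_0}(X_i)\bigr)$ is a valid way to make ``no first-order effect'' precise.
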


\begin{remark}[Reading the correction term]
The vector $\Delta$ measures a mismatch. It compares the average response of the chart to an infinitesimal change in calibration with the response of the chart evaluated at the barycenter itself. When those responses agree, calibration disappears at first order. Such agreement is a structural property of the chart--law pair, not a generic reward for using robust preliminary estimators.
\end{remark}

\subsection{Implementable variance estimation}

An asymptotic variance that cannot be estimated is only half an inferential result. Suppose that estimated influence contributions $\widehat\psi_{\theta,i}$ are available for the preliminary estimator. Define
\[
\widehat\Delta
=
\frac1n\sum_{i=1}^n
\nabla_\theta G_{\widehat\theta_n}(X_i)
-
\nabla_\theta G_\theta(\widehat b_n^{\mathrm{ad}})
\big|_{\theta=\widehat\theta_n},
\]
and
\[
\widehat{\IF}_{i}
=
\frac{
G_{\widehat\theta_n}(X_i)-\widehat m_n
+
\widehat\Delta^\top\widehat\psi_{\theta,i}
}
{G'_{\widehat\theta_n}(\widehat b_n^{\mathrm{ad}})}.
\]
Let
\[
\overline{\widehat{\IF}}
=
\frac1n\sum_{i=1}^n\widehat{\IF}_{i},
\qquad
\widehat\tau_{\mathrm{ad}}^2
=
\frac1{n-1}\sum_{i=1}^n
(\widehat{\IF}_{i}-\overline{\widehat{\IF}})^2.
\]

\begin{proposition}[Consistency of the adaptive sandwich variance]\label{prop:adaptive-var}
Assume Theorem~\ref{thm:adaptive-al}, $L^2(P)$-consistent estimation of $\psi_\theta$, and consistent estimation of the derivative terms entering $\Delta$ and $G'_{\theta_0}(b_0)$. Then
\[
\widehat\tau_{\mathrm{ad}}^2\xrightarrow{\Prob}\tau_{\mathrm{ad}}^2.
\]
\end{proposition}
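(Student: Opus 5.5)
The plan is to compare the estimated influence contributions $\widehat{\IF}_i$ with the oracle values $\IF_{\mathrm{ad}}(X_i)$ in empirical $L^2$. Once the comparison term vanishes, the law of large numbers applied to the oracle values finishes the argument. Write
\[
\widehat{\IF}_i=\IF_{\mathrm{ad}}(X_i)+R_i,
\qquad
\frac1n\sum_{i=1}^n R_i^2\xrightarrow{\Prob}0 .
\]
Granting the display, the Cauchy--Schwarz inequality gives
\[
\Bigl|\tfrac1n\textstyle\sum\widehat{\IF}_i^2-\tfrac1n\sum\IF_{\mathrm{ad}}(X_i)^2\Bigr|
\le \tfrac1n\textstyle\sum R_i^2+2\bigl(\tfrac1n\sum R_i^2\bigr)^{1/2}\bigl(\tfrac1n\sum \IF_{\mathrm{ad}}(X_i)^2\bigr)^{1/2}\to0 .
\]
The same comparison holds for the means, since $|\overline{\widehat{\IF}}-\overline{\IF_{\mathrm{ad}}}|\le(\frac1n\sum R_i^2)^{1/2}$. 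Because $\IF_{\mathrm{ad}}$ has mean zero and finite second moment ($G_{\theta_0}$ is bounded and $\E\|\psi_\theta\|^2<\infty$), the weak law of large numbers gives $\frac1n\sum\IF_{\mathrm{ad}}(X_i)^2\to\tau_{\mathrm{ad}}^2$ and $\overline{\IF_{\mathrm{ad}}}\to0$. The factor $n/(n-1)$ is irrelevant, so $\widehat\tau_{\mathrm{ad}}^2\to\tau_{\mathrm{ad}}^2$.

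To control $R_i$, I would decompose the numerator difference into three pieces. The first is $G_{\widehat\theta_n}(X_i)-G_{\theta_0}(X_i)$. The second is $\widehat m_n-m(\theta_0)$. The third is $\widehat\Delta^\top\widehat\psi_{\theta,i}-\Delta^\top\psi_\theta(X_i)$. The denominator is handled separately: $G'_{\widehat\theta_n}(\widehat b_n^{\mathrm{ad}})\to G'_{\theta_0}(b_0)>0$ by the assumed consistency of the derivative terms. A ratio of the form $(a+r)/(d+e)$ then differs from $a/d$ by at most a multiple of $|r|+|e|\,|a|$ with probability tending to one, and averaging the squares keeps everything $o_{\Prob}(1)$.

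For the second piece, $\widehat m_n\to m(\theta_0)$ follows from the Donsker (hence Glivenko--Cantelli) property on $\Theta_0$ together with continuity of $m$; it is in any case implied by the expansion in the proof of Theorem~\ref{thm:adaptive-al}. For the third piece, I split
\[
\widehat\Delta^\top\widehat\psi_{\theta,i}-\Delta^\top\psi_\theta(X_i)=(\widehat\Delta-\Delta)^\top\widehat\psi_{\theta,i}+\Delta^\top(\widehat\psi_{\theta,i}-\psi_\theta(X_i)).
\]
Its empirical mean square is bounded by $\|\widehat\Delta-\Delta\|^2\cdot\frac1n\sum\|\widehat\psi_{\theta,i}\|^2$ plus $\|\Delta\|^2\cdot\frac1n\sum\|\widehat\psi_{\theta,i}-\psi_\theta(X_i)\|^2$. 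I read the hypothesis of $L^2(P)$-consistent estimation of $\psi_\theta$ as saying that the latter average is $o_{\Prob}(1)$, and this also keeps $\frac1n\sum\|\widehat\psi_{\theta,i}\|^2$ bounded in probability. Consistency of $\widehat\Delta$ is one of the assumed derivative-term consistencies.

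The main obstacle is the first piece, $\frac1n\sum(G_{\widehat\theta_n}(X_i)-G_{\theta_0}(X_i))^2\to0$, because the function is evaluated at a random parameter. I would handle it with Assumption~\ref{ass:adaptive}(ii). The class $\{(G_\theta-G_{\theta_0})^2:\theta\in\Theta_0\}$ is Glivenko--Cantelli: it is a Lipschitz transform of a uniformly bounded Donsker class. Hence the empirical average is within $o_{\Prob}(1)$ of $P(G_\theta-G_{\theta_0})^2$ evaluated at $\theta=\widehat\theta_n$. That quantity tends to zero by $L^2(P)$-differentiability, hence continuity, at $\theta_0$, together with $\widehat\theta_n\to\theta_0$ in probability. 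For the location--scale family one can avoid empirical processes altogether. The map $\theta\mapsto G_0((x-\mu)/\sigma)$ is continuous for every $x$ and bounded by $1$, so a dominated-convergence argument along subsequences gives the same conclusion.
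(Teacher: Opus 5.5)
The paper gives no proof of Proposition~\ref{prop:adaptive-var}. The statement is followed directly by the Z-estimator remark, so there is no argument of the author's to compare with. Your proposal supplies the standard argument, and it is correct.

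You reduce to $\frac1n\sum_i R_i^2\xrightarrow{\Prob}0$ and move second moments and means across by Cauchy--Schwarz. The weak law then handles the mean-zero, square-integrable $\IF_{\mathrm{ad}}$. You control $R_i$ through four ingredients:
\begin{enumerate}[label=(\roman*)]
\item the ratio bound for $(a+r)/(d+e)$;
\item the $O_{\Prob}(n^{-1/2})$ behaviour of $\widehat m_n-m(\theta_0)$, taken from the expansion in the proof of Theorem~\ref{thm:adaptive-al};
\item the split of $\widehat\Delta^\top\widehat\psi_{\theta,i}-\Delta^\top\psi_\theta(X_i)$;
\item a Glivenko--Cantelli argument for the squared class $\{(G_\theta-G_{\theta_0})^2\}$, combined with $L^2(P)$-continuity at $\theta_0$.
\end{enumerate}

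Two comments.

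First, you read ``$L^2(P)$-consistent estimation of $\psi_\theta$'' as $\frac1n\sum_i\|\widehat\psi_{\theta,i}-\psi_\theta(X_i)\|^2\xrightarrow{\Prob}0$. This is not merely convenient; it is the reading under which the proposition is actually true. Suppose one assumes only a data-dependent function $\widehat\psi$ with $\|\widehat\psi-\psi_\theta\|_{L^2(P)}\xrightarrow{\Prob}0$ and sets $\widehat\psi_{\theta,i}=\widehat\psi(X_i)$. Then $\widehat\psi$ could be altered on the finite set $\{X_1,\dots,X_n\}$, which is $P$-null for continuous $P$. The population distance is unchanged, but the empirical average need not vanish. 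Under the population reading you need an extra step, for instance a Glivenko--Cantelli property for the class containing $\|\widehat\psi-\psi_\theta\|^2$, or sample splitting. For the median--raw-MAD calibration of Remark~\ref{rem:median-mad}, $\widehat\psi$ is built from indicators with estimated cut-points and estimated density values. That class is VC, so the transfer is immediate.

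Second, your location--scale shortcut (``dominated convergence along subsequences'') is under-specified, because both the parameter and the empirical measure change with $n$. A clean version uses the local envelope
$h_\delta(x)=\sup_{\|\theta-\theta_0\|\le\delta}\bigl(G_\theta(x)-G_{\theta_0}(x)\bigr)^2$.
\begin{enumerate}[label=(\roman*)]
\item It is measurable, since the supremum can be taken over a countable dense set by continuity in $\theta$.
\item It is bounded by $1$.
\item It tends to $0$ pointwise as $\delta\downarrow0$.
\end{enumerate}
On the event $\{\|\widehat\theta_n-\theta_0\|\le\delta\}$, whose probability tends to one, the empirical average is at most $\frac1n\sum_i h_\delta(X_i)$. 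This converges almost surely to $Ph_\delta$, which can be made small by dominated convergence.
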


\begin{remark}[Z-estimator calibration]
If $\widehat\theta_n$ solves
\[
\frac1n\sum_{i=1}^n\varphi(X_i,\widehat\theta_n)=0,
\]
with nonsingular
\[
A=\E\bigl[\partial_\theta\varphi(X,\theta_0)\bigr],
\]
then
\[
\psi_\theta(x)=-A^{-1}\varphi(x,\theta_0).
\]
This gives the implementable contribution
\[
\widehat\psi_{\theta,i}
=-\widehat A^{-1}\varphi(X_i,\widehat\theta_n).
\]
\end{remark}

\begin{remark}[Raw median--MAD calibration]\label{rem:median-mad}
The numerical illustration uses the raw MAD. Let $\mu_0$ be the unique median of $P$ and let $\sigma_0$ be the unique median of $|X-\mu_0|$. Suppose that $P$ has a density $f$ that is continuous and positive at $\mu_0$ and at $\mu_0\pm\sigma_0$. The influence representation of the preliminary estimator $\theta=(\mu,\sigma)$ is $\psi_\theta=(\psi_\mu,\psi_\sigma)^\top$, where
\[
\psi_\mu(x)
=
\frac{1/2-\mathbf 1_{(-\infty,\mu_0]}(x)}{f(\mu_0)}
\]
and
\[
\psi_\sigma(x)
=
\frac{
1/2-\mathbf 1_{[\mu_0-\sigma_0,\,\mu_0+\sigma_0]}(x)
-\{f(\mu_0+\sigma_0)-f(\mu_0-\sigma_0)\}\psi_\mu(x)
}{f(\mu_0+\sigma_0)+f(\mu_0-\sigma_0)}.
\]
For a distribution symmetric about $\mu_0$, the density-difference term vanishes. A consistency-corrected MAD multiplies both the scale functional and its influence function by the chosen correction constant and therefore defines a different calibrated target.
\end{remark}

\begin{corollary}[Adaptive Wald interval]\label{cor:adaptive-wald}
Under Proposition~\ref{prop:adaptive-var},
\[
\widehat b_n^{\mathrm{ad}}
\pm
z_{1-\alpha/2}
\frac{\widehat\tau_{\mathrm{ad}}}{\sqrt n}
\]
is an asymptotic $(1-\alpha)$ confidence interval for $b_0$.
\end{corollary}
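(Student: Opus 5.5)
The plan is to combine the asymptotic normality from Theorem~\ref{thm:adaptive-al} with the variance consistency from Proposition~\ref{prop:adaptive-var} through Slutsky's lemma. The one point that needs care is a degenerate limit. Theorem~\ref{thm:adaptive-al} gives
\[
\sqrt n(\widehat b_n^{\mathrm{ad}}-b_0)\rightsquigarrow\mathcal N(0,\tau_{\mathrm{ad}}^2),
\]
and Proposition~\ref{prop:adaptive-var} gives $\widehat\tau_{\mathrm{ad}}^2\xrightarrow{\Prob}\tau_{\mathrm{ad}}^2$. I will assume, as is implicit for a nondegenerate Wald interval, that $\tau_{\mathrm{ad}}^2>0$. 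Then $\widehat\tau_{\mathrm{ad}}\xrightarrow{\Prob}\tau_{\mathrm{ad}}>0$ by the continuous mapping theorem applied to the square root.

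First I form the studentised statistic
\[
T_n=\frac{\sqrt n(\widehat b_n^{\mathrm{ad}}-b_0)}{\widehat\tau_{\mathrm{ad}}},
\]
which is well defined on events whose probability tends to one. By Slutsky's lemma, $T_n\rightsquigarrow\mathcal N(0,1)$.

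Second, the event $b_0\in\widehat b_n^{\mathrm{ad}}\pm z_{1-\alpha/2}\widehat\tau_{\mathrm{ad}}/\sqrt n$ coincides with $\{|T_n|\le z_{1-\alpha/2}\}$. The standard normal law puts no mass on the boundary points $\pm z_{1-\alpha/2}$. The portmanteau theorem then gives
\[
\Prob\bigl(|T_n|\le z_{1-\alpha/2}\bigr)\to\Prob\bigl(|Z|\le z_{1-\alpha/2}\bigr)=1-\alpha,
\]
where $Z\sim\mathcal N(0,1)$.

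The main obstacle is the positivity of $\tau_{\mathrm{ad}}^2$, and I would handle it explicitly. If $\tau_{\mathrm{ad}}^2=0$, the influence function $\IF_{\mathrm{ad}}$ vanishes almost surely. In that case the interval's coverage is not determined by first-order theory. So I would state $\tau_{\mathrm{ad}}^2>0$ as the standing nondegeneracy condition, noting that it holds whenever $G_{\theta_0}(X)-m(\theta_0)+\Delta^\top\psi_\theta(X)$ is not almost surely zero.
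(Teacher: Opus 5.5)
Your argument is correct and is the Slutsky-plus-portmanteau argument the paper leaves implicit: the corollary is stated without proof, as an immediate consequence of Theorem~\ref{thm:adaptive-al} and Proposition~\ref{prop:adaptive-var}. Your explicit nondegeneracy condition $\tau_{\mathrm{ad}}^2>0$ is a real hypothesis that the paper's statement omits, and it is worth keeping.
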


\subsection{Bootstrap with chart recalibration}

Calibration must travel with the resample. For each bootstrap sample, recompute the preliminary estimator, rebuild the chart, and only then calculate the barycenter:
\[
\widehat\theta_n^*
\longrightarrow
G_{\widehat\theta_n^*}
\longrightarrow
\widehat b_n^{\mathrm{ad},*}.
\]
Resampling the transformed observations under a frozen $\widehat\theta_n$ answers a different question. It suppresses the calibration term and its covariance with the coordinate fluctuation. Unless calibration is first-order orthogonal, the frozen-chart bootstrap generally estimates the wrong first-order law. Because the covariance contribution has no fixed sign, the resulting uncertainty may be understated or overstated.

\begin{theorem}[Bootstrap validity for the adaptive barycenter]\label{thm:adaptive-bootstrap}
Assume Theorem~\ref{thm:adaptive-al} and bootstrap stochastic equicontinuity of the chart class. Let $N_1^*,\ldots,N_n^*$ be the multinomial counts of the ordinary nonparametric bootstrap sample, and suppose that the preliminary estimator satisfies the conditional expansion
\[
\sqrt n(\widehat\theta_n^*-\widehat\theta_n)
=
\frac1{\sqrt n}\sum_{i=1}^n
(N_i^*-1)\psi_\theta(X_i)
+o_{\Prob^*}(1)
\]
in probability. Then, conditionally on the data,
\[
\sqrt n\left(
\widehat b_n^{\mathrm{ad},*}-\widehat b_n^{\mathrm{ad}}
\right)
\rightsquigarrow
\mathcal N(0,\tau_{\mathrm{ad}}^2)
\]
in probability.
\end{theorem}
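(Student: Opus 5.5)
The plan is to mirror the proof of Theorem~\ref{thm:adaptive-al} in the bootstrap world. Write $\mathbb P_n$ for the empirical measure and $\mathbb P_n^*=n^{-1}\sum_i N_i^*\delta_{X_i}$ for the bootstrap measure. Set $W_n^*(\theta)=\mathbb P_n^*G_\theta$ and $W_n(\theta)=\mathbb P_nG_\theta$, and let $\Phi(w,\theta)=G_\theta^{-1}(w)$ as before. Then
\[
\widehat b_n^{\mathrm{ad},*}-\widehat b_n^{\mathrm{ad}}=\Phi\bigl(W_n^*(\widehat\theta_n^*),\widehat\theta_n^*\bigr)-\Phi\bigl(W_n(\widehat\theta_n),\widehat\theta_n\bigr).
\]
The goal is to show that, conditionally and in probability,
\[
\sqrt n\bigl(\widehat b_n^{\mathrm{ad},*}-\widehat b_n^{\mathrm{ad}}\bigr)=\frac1{\sqrt n}\sum_{i=1}^n(N_i^*-1)\IF_{\mathrm{ad}}(X_i)+o_{\Prob^*}(1).
\]
Once this holds, the conditional multiplier/bootstrap CLT applies. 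The influence function $\IF_{\mathrm{ad}}$ is square integrable because $G_{\theta_0}$ is bounded and $\E\|\psi_\theta\|^2<\infty$, so the standard bootstrap CLT for sample means of $\IF_{\mathrm{ad}}(X_i)$ gives the limit $\mathcal N(0,\tau_{\mathrm{ad}}^2)$.

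\textbf{Coordinate mean.} First decompose
\[
\sqrt n\{W_n^*(\widehat\theta_n^*)-W_n(\widehat\theta_n)\}=\sqrt n(\mathbb P_n^*-\mathbb P_n)G_{\widehat\theta_n^*}+\sqrt n\{\mathbb P_nG_{\widehat\theta_n^*}-\mathbb P_nG_{\widehat\theta_n}\}.
\]
For the first term I use the assumed bootstrap stochastic equicontinuity of the chart class, together with $\widehat\theta_n^*\to\theta_0$ in conditional probability. This consistency follows from the conditional expansion, since $\sqrt n(\widehat\theta_n^*-\widehat\theta_n)=O_{\Prob^*}(1)$ and $\widehat\theta_n\to\theta_0$. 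With these, the first term can be replaced by $\sqrt n(\mathbb P_n^*-\mathbb P_n)G_{\theta_0}$, up to $o_{\Prob^*}(1)$. For the second term I write $\mathbb P_n=P+(\mathbb P_n-P)$. The unconditional equicontinuity from Assumption~\ref{ass:adaptive}(ii) makes the $(\mathbb P_n-P)$ part of the difference negligible. Differentiability of $m$ at $\theta_0$ then turns the $P$ part into $\nabla_\theta m(\theta_0)^\top\sqrt n(\widehat\theta_n^*-\widehat\theta_n)+o_{\Prob^*}(1)$.

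\textbf{Main obstacle.} This linearisation is the step I expect to be delicate. It is a difference $m(\widehat\theta_n^*)-m(\widehat\theta_n)$ taken around a random point $\widehat\theta_n$, not around $\theta_0$. Mere differentiability at $\theta_0$ gives
\[
m(\theta)=m(\theta_0)+\nabla m^\top(\theta-\theta_0)+o(\|\theta-\theta_0\|).
\]
Applying this to both points and subtracting leaves remainders of order $o(n^{-1/2})$, because both $\widehat\theta_n^*-\theta_0$ and $\widehat\theta_n-\theta_0$ are $O(n^{-1/2})$ in the appropriate sense. That settles it. The same two-point trick handles $\Phi$.

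\textbf{Pullback through the inverse chart.} For the outer map I apply the same subtraction argument to $\Phi$. It is differentiable at $(m(\theta_0),\theta_0)$, with the partial derivatives computed in the proof of Theorem~\ref{thm:adaptive-al}, namely $1/G'_{\theta_0}(b_0)$ and $-\nabla_\theta G_\theta(b_0)/G'_{\theta_0}(b_0)$. This needs $W_n^*(\widehat\theta_n^*)-m(\theta_0)=O_{\Prob^*}(n^{-1/2})$, which follows from the display above plus the unconditional result.

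\textbf{Assembly.} Collecting terms gives
\[
\frac{\sqrt n(\mathbb P_n^*-\mathbb P_n)G_{\theta_0}+\Delta^\top\sqrt n(\widehat\theta_n^*-\widehat\theta_n)}{G'_{\theta_0}(b_0)}+o_{\Prob^*}(1).
\]
Substituting the assumed conditional expansion of $\widehat\theta_n^*$ produces exactly $n^{-1/2}\sum_i(N_i^*-1)\IF_{\mathrm{ad}}(X_i)$, because the centring constant $m(\theta_0)$ cancels against $\sum_i(N_i^*-1)=0$. The conditional CLT then finishes the argument. As a final bookkeeping step I would check that the $o_{\Prob^*}(1)$ terms are $o_{\Prob^*}(1)$ in $\Prob$-probability, so that the conditional weak convergence holds in probability as claimed.
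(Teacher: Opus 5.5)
Your proposal is correct and follows the paper's route: expand $W_n^*(\widehat\theta_n^*)-W_n(\widehat\theta_n)$ using bootstrap stochastic equicontinuity and the conditional expansion of $\widehat\theta_n^*$, push it through the same derivative of $\Phi(w,\theta)=G_\theta^{-1}(w)$ to reach $\mathbb G_n^*\IF_{\mathrm{ad}}$, and finish with the conditional bootstrap CLT. Your split into a bootstrap-process term and an empirical drift term, and your two-point linearisation of $m$ and $\Phi$ around $\theta_0$, spell out steps that the paper's proof compresses into a single line.
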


\begin{proof}
Write
\[
\mathbb G_n^*h
=
\frac1{\sqrt n}\sum_{i=1}^n(N_i^*-1)h(X_i).
\]
The conditional expansion of $\widehat\theta_n^*$ and bootstrap stochastic equicontinuity give
\[
\sqrt n\{W_n^*(\widehat\theta_n^*)-W_n(\widehat\theta_n)\}
=
\mathbb G_n^*G_{\theta_0}
+\nabla_\theta m(\theta_0)^\top
\mathbb G_n^*\psi_\theta
+o_{\Prob^*}(1)
\]
in probability. Applying the same derivative of $\Phi(w,\theta)=G_\theta^{-1}(w)$ as in the proof of Theorem~\ref{thm:adaptive-al} yields
\[
\sqrt n(\widehat b_n^{\mathrm{ad},*}-\widehat b_n^{\mathrm{ad}})
=
\mathbb G_n^*\IF_{\mathrm{ad}}+o_{\Prob^*}(1).
\]
Since $\IF_{\mathrm{ad}}\in L^2(P)$, the conditional bootstrap central limit theorem completes the proof.
\end{proof}

The theorem supports percentile and basic bootstrap intervals. When $\widehat\tau_{\mathrm{ad}}^*$ is recomputed within each bootstrap sample, the studentised statistic
\[
T_n^*
=
\frac{\sqrt n(
\widehat b_n^{\mathrm{ad},*}-\widehat b_n^{\mathrm{ad}})}
{\widehat\tau_{\mathrm{ad}}^*}
\]
provides a studentised bootstrap interval and is the preferred implementation when sample size permits. A multiplier bootstrap based directly on the estimated influence contributions $\widehat{\IF}_i$ gives a cheaper first-order alternative.

\section{Joint inference for initial Kolmogorov and centred coordinate moments}\label{sec:moments}

A moment sequence is a connected statistical object. Estimating each order separately conceals the dependence created by the common transformed sample. For a fixed integer $q\geq1$, we begin with the whole coordinate vector and define
\[
\mathbf p_q
=
(p_1,\ldots,p_q)^\top,
\qquad
p_r=\E[U^r],
\]
where the superscript $(G)$ is suppressed for notational economy, with empirical counterpart
\[
\widehat{\mathbf p}_q
=
(\widehat p_1,\ldots,\widehat p_q)^\top,
\qquad
\widehat p_r
=
\frac1n\sum_{i=1}^nU_i^r.
\]

\subsection{Initial coordinate moments}

\begin{theorem}[Joint central limit theorem]\label{thm:joint-moments}
For fixed $q$,
\[
\sqrt n(\widehat{\mathbf p}_q-\mathbf p_q)
\rightsquigarrow
\mathcal N_q(0,\Sigma_q),
\]
where
\[
(\Sigma_q)_{rs}
=
\Cov(U^r,U^s)
=
p_{r+s}-p_rp_s,
\qquad 1\leq r,s\leq q.
\]
\end{theorem}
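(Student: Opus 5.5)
The plan is to apply the multivariate central limit theorem to i.i.d. bounded random vectors. Set
\[
\mathbf Y_i=(U_i,U_i^2,\ldots,U_i^q)^\top,\qquad U_i=G(X_i),
\]
so that $\widehat{\mathbf p}_q=n^{-1}\sum_{i=1}^n\mathbf Y_i$ and $\E[\mathbf Y_i]=\mathbf p_q$. The $\mathbf Y_i$ are i.i.d. because the $X_i$ are i.i.d. and $G$ is a fixed measurable map.

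The vectors are bounded: every coordinate of $\mathbf Y_i$ lies in $(0,1)$, so $\|\mathbf Y_i\|\le\sqrt q$. Hence all moments exist, in particular the second moments. No tail assumption on $X$ is needed; this is the same boundedness that drives Proposition~\ref{prop:fixed-var}.

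Apply the Lindeberg--L\'evy multivariate CLT, or equivalently the Cram\'er--Wold device. For any $\mathbf a\in\R^q$, the scalars $\mathbf a^\top\mathbf Y_i$ are i.i.d. and bounded, so the univariate CLT gives
\[
\sqrt n\,\mathbf a^\top(\widehat{\mathbf p}_q-\mathbf p_q)\rightsquigarrow\mathcal N(0,\mathbf a^\top\Sigma_q\mathbf a),
\]
where $\Sigma_q=\Cov(\mathbf Y_1)$. A degenerate variance, for instance when $U$ is a.s. constant, is harmless. In that case the limit is a point mass, which the convention $\mathcal N(0,0)$ covers, so singular $\Sigma_q$ needs no separate argument.

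The covariance entries follow directly:
\[
\Cov(U^r,U^s)=\E[U^{r+s}]-\E[U^r]\E[U^s]=p_{r+s}-p_rp_s .
\]
Orders up to $2q$ appear here, and all are finite since $U^{r+s}\in(0,1)$.

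There is no genuine obstacle. The only point to state explicitly is that the argument uses moments $p_{r+s}$ beyond the vector's own range, and that these are automatically finite by boundedness.
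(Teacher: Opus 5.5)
Your proposal is correct and follows the paper's own proof, which observes that $(U,U^2,\ldots,U^q)$ is bounded and invokes the multivariate central limit theorem. Your extra details are all sound and only make explicit what the paper leaves implicit: the Cram\'er--Wold reduction, the identity $\Cov(U^r,U^s)=p_{r+s}-p_rp_s$, and the remark that a singular $\Sigma_q$ causes no difficulty.
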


\begin{proof}
The vector $(U,U^2,\ldots,U^q)$ is bounded. The result is the multivariate central limit theorem.
\end{proof}

Define
\[
(\widehat\Sigma_q)_{rs}
=
\frac1{n-1}\sum_{i=1}^n
(U_i^r-\widehat p_r)(U_i^s-\widehat p_s).
\]

\begin{proposition}[Consistent covariance estimator]\label{prop:moment-cov}
For fixed $q$,
\[
\widehat\Sigma_q\xrightarrow{\Prob}\Sigma_q.
\]
\end{proposition}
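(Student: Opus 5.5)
The plan is to reduce the statement to the strong law of large numbers for bounded variables together with an algebraic expansion of the sample covariance. Fix indices $1\leq r,s\leq q$. Since $U_i=G(X_i)\in(0,1)$, each power $U_i^k$ lies in $(0,1)$ for $k\leq 2q$. Every moment therefore exists, and the strong law applies to each of the finitely many sequences $(U_i^k)_{i\geq1}$.

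Next, expand the $(r,s)$ entry as
\[
(\widehat\Sigma_q)_{rs}
=
\frac{n}{n-1}\left(\frac1n\sum_{i=1}^nU_i^{r+s}-\widehat p_r\widehat p_s\right).
\]
This uses only the identity $\frac1n\sum_i U_i^r=\widehat p_r$, so the cross terms combine into $-\widehat p_r\widehat p_s$. By the strong law,
\[
\frac1n\sum_{i=1}^n U_i^{r+s}\to p_{r+s},\qquad \widehat p_r\to p_r,\qquad \widehat p_s\to p_s
\]
almost surely. The prefactor satisfies $n/(n-1)\to1$. The continuous mapping theorem then gives
\[
(\widehat\Sigma_q)_{rs}\to p_{r+s}-p_rp_s=(\Sigma_q)_{rs}
\]
almost surely. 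This is exactly the covariance identified in Theorem~\ref{thm:joint-moments}.

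Finally, $q$ is fixed, so there are only $q^2$ entries. Entrywise almost-sure convergence on the intersection of finitely many probability-one events yields $\widehat\Sigma_q\to\Sigma_q$ almost surely in any matrix norm, which implies convergence in probability.

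No step presents a real obstacle. The only point requiring care is that $q$ is held fixed, so the number of entries and the moment orders up to $2q$ are finite. Boundedness of the coordinates then makes integrability automatic, with no moment assumption on $X$ itself.
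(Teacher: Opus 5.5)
Your proof is correct and is the argument the paper relies on. The paper states this proposition without proof, implicitly using the reasoning it gives for $\widehat v_G$ in Proposition~\ref{prop:fixed-var}: boundedness of $U_i^k\in(0,1)$ for all $k\leq 2q$, the strong law, and your algebraic expansion of the sample covariance entries.
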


Whenever $\Sigma_q$ is nonsingular,
\[
n(\widehat{\mathbf p}_q-\mathbf p_q)^\top
\widehat\Sigma_q^{-1}
(\widehat{\mathbf p}_q-\mathbf p_q)
\rightsquigarrow
\chi_q^2,
\]
which yields a joint Wald ellipsoid for the first $q$ coordinate moments.

\subsection{Initial Kolmogorov moments}

Let
\[
\mathbf M_q
=
(M_1^{(G)},\ldots,M_q^{(G)})^\top,
\qquad
M_r^{(G)}=G^{-1}(p_r),
\]
and define
\[
D_q
=
\diag\left(
\frac1{G'(M_1^{(G)})},\ldots,
\frac1{G'(M_q^{(G)})}
\right).
\]

\begin{corollary}[Joint inference for initial Kolmogorov moments]\label{cor:pulled-moments}
If $G'$ is continuous and positive at $M_r^{(G)}$ for $r=1,\ldots,q$, then
\[
\sqrt n(\widehat{\mathbf M}_q-\mathbf M_q)
\rightsquigarrow
\mathcal N_q(0,D_q\Sigma_qD_q),
\]
where
\[
\widehat M_r^{(G)}=G^{-1}(\widehat p_r).
\]
A consistent covariance estimator is
\[
\widehat D_q\widehat\Sigma_q\widehat D_q,
\qquad
\widehat D_q
=
\diag\left(
\frac1{G'(\widehat M_1^{(G)})},\ldots,
\frac1{G'(\widehat M_q^{(G)})}
\right).
\]
\end{corollary}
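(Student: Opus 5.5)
The plan is to obtain the result from Theorem~\ref{thm:joint-moments} by the multivariate delta method, and then get the covariance estimator from Proposition~\ref{prop:moment-cov} by continuous mapping.

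\textbf{Delta method.} Since $U\in(0,1)$ almost surely, $p_r\in(0,1)$ for every $r$. Hence $M_r^{(G)}=G^{-1}(p_r)$ is an interior point of $I$. Consider the coordinatewise map
\[
h(u_1,\ldots,u_q)=\bigl(G^{-1}(u_1),\ldots,G^{-1}(u_q)\bigr)^\top .
\]
Because $G$ is a continuous strictly increasing bijection with $G'$ continuous and positive at $M_r^{(G)}$, the inverse function theorem in one variable shows that $G^{-1}$ is differentiable at $p_r$ with derivative $1/G'(M_r^{(G)})$. It is enough to have $G'$ existing and positive at that single point, since $G^{-1}$ is continuous and monotone. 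So $h$ is differentiable at $\mathbf p_q$ with Jacobian $D_q$, which is diagonal because $h$ acts coordinatewise. Now $\widehat{\mathbf M}_q=h(\widehat{\mathbf p}_q)$ and $\mathbf M_q=h(\mathbf p_q)$. The multivariate delta method applied to the limit $\mathcal N_q(0,\Sigma_q)$ of Theorem~\ref{thm:joint-moments} gives the limit $\mathcal N_q(0,D_q\Sigma_qD_q^\top)=\mathcal N_q(0,D_q\Sigma_qD_q)$. One minor point: $\widehat p_r\in(0,1)$ holds surely, so $\widehat M_r^{(G)}$ is always well defined and no extended-inverse convention is needed.

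\textbf{Consistency of the estimator.} By the law of large numbers for bounded variables, $\widehat p_r\to p_r$ in probability, or almost surely. Continuity of $G^{-1}$ then gives $\widehat M_r^{(G)}\to M_r^{(G)}$. Continuity and positivity of $G'$ at $M_r^{(G)}$ yield $1/G'(\widehat M_r^{(G)})\to 1/G'(M_r^{(G)})$ in probability. Formally, $G'$ is continuous on a neighbourhood of $M_r^{(G)}$, and $\widehat M_r^{(G)}$ lies in that neighbourhood with probability tending to one. This is the same reasoning as in Proposition~\ref{prop:fixed-var}. Therefore $\widehat D_q\to D_q$ in probability. Combining this with $\widehat\Sigma_q\to\Sigma_q$ from Proposition~\ref{prop:moment-cov}, the continuous mapping theorem applied to matrix multiplication gives $\widehat D_q\widehat\Sigma_q\widehat D_q\to D_q\Sigma_qD_q$.

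\textbf{Main obstacle.} The only step needing care is how the hypothesis is read. "$G'$ continuous and positive at $M_r^{(G)}$" must be taken to mean that $G'$ exists near that point and is continuous there. This is needed so that evaluating $G'$ at the random point $\widehat M_r^{(G)}$ makes sense and converges. If only pointwise differentiability at $M_r^{(G)}$ were assumed, the limit law would still hold, but $\widehat D_q$ could fail to be defined or consistent.
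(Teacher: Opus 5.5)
Your proposal is correct and follows the route the paper intends: the paper states the corollary without a separate proof, as the delta-method consequence of Theorem~\ref{thm:joint-moments} with diagonal Jacobian $D_q$, with the covariance estimator following from Proposition~\ref{prop:moment-cov} and continuous mapping exactly as in Proposition~\ref{prop:fixed-var}. You also read the hypothesis correctly: "$G'$ continuous at $M_r^{(G)}$" presupposes that $G'$ exists near that point, and this is precisely what the consistency of $\widehat D_q$ requires.
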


For an individual initial Kolmogorov moment, the interval should be constructed for $p_r$ and transported through $G^{-1}$. This avoids imposing artificial symmetry in value space.

\subsection{Centred coordinate moments}

Set $p_0=1$. For $r\geq2$,
\[
\mu_{r,c}^{(G)}
=
\sum_{j=0}^r
\binom rj(-p_1)^{r-j}p_j.
\]
The centred coordinate-moment vector is a polynomial image of the initial coordinate-moment vector. No separate probability theory is needed. Once the covariance of $(\widehat p_1,\ldots,\widehat p_q)$ has been estimated, the dependence among centred dispersion and shape summaries follows from the Jacobian of this map.

\begin{corollary}[Joint inference for centred coordinate moments]\label{cor:centered-moments}
Let $h_q$ denote the polynomial map from $(p_1,\ldots,p_q)$ to the selected centred coordinate moments, and let $H_q$ be its Jacobian at $\mathbf p_q$. Then
\[
\sqrt n\{h_q(\widehat{\mathbf p}_q)-h_q(\mathbf p_q)\}
\rightsquigarrow
\mathcal N(0,H_q\Sigma_qH_q^\top),
\]
with consistent plug-in covariance estimator
\[
\widehat H_q\widehat\Sigma_q\widehat H_q^\top.
\]
\end{corollary}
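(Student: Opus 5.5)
The statement is a direct application of the multivariate delta method to the joint central limit theorem of Theorem~\ref{thm:joint-moments}. The first step is to fix the map precisely. Each selected centred coordinate moment $\mu_{r,c}^{(G)}$, with $2\le r\le q$, is the polynomial
\[
\sum_{j=0}^r\binom rj(-p_1)^{r-j}p_j
\]
in $(p_1,\ldots,p_r)$, with $p_0=1$. It uses only coordinates of $\mathbf p_q$, so $h_q:\R^q\to\R^k$ is a polynomial map. It is therefore continuously differentiable everywhere, and no positivity or nondegeneracy condition is needed, unlike Corollary~\ref{cor:pulled-moments}.

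The entries of $H_q$ follow by direct differentiation. For $2\le j\le r$,
\[
\partial\mu_{r,c}/\partial p_j=\binom rj(-p_1)^{r-j}.
\]
The $p_1$-derivative collects two kinds of terms: those from the explicit $(-p_1)^{r-j}$ factors and the $j=1$ term itself. A cleaner route is to write $\mu_{r,c}=\E[(U-p_1)^r]$ and differentiate in the law. Then
\[
\partial\mu_{r,c}/\partial p_1=-r\,\mu_{r-1,c}^{(G)},
\]
and the remaining partials are as stated. This identity is not needed for the proof, but it gives a convenient closed form for $\widehat H_q$.

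With $h_q$ differentiable at $\mathbf p_q$, the delta method applied to $\sqrt n(\widehat{\mathbf p}_q-\mathbf p_q)\rightsquigarrow\mathcal N_q(0,\Sigma_q)$ gives
\[
\sqrt n\{h_q(\widehat{\mathbf p}_q)-h_q(\mathbf p_q)\}=H_q\sqrt n(\widehat{\mathbf p}_q-\mathbf p_q)+o_{\Prob}(1),
\]
and hence the limit $\mathcal N(0,H_q\Sigma_qH_q^\top)$. If $H_q\Sigma_qH_q^\top$ is singular, the limit is simply a degenerate Gaussian. This is allowed, because the statement does not require invertibility.

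For the covariance estimator, note that $\widehat{\mathbf p}_q\to\mathbf p_q$ almost surely by the strong law for the bounded vector $(U,\ldots,U^q)$. Continuity of the Jacobian of a polynomial map then gives $\widehat H_q=H_q(\widehat{\mathbf p}_q)\xrightarrow{\Prob}H_q$. Combining this with Proposition~\ref{prop:moment-cov} and the continuous mapping theorem applied to $(A,B)\mapsto ABA^\top$ yields $\widehat H_q\widehat\Sigma_q\widehat H_q^\top\xrightarrow{\Prob}H_q\Sigma_qH_q^\top$.

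There is no real obstacle. The only point needing care is bookkeeping in $H_q$, specifically the $p_1$ column, since $p_1$ enters both as centring and as the $j=1$ term. I would verify that column against the identity $-r\mu_{r-1,c}$.
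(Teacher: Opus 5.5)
Your argument is the one the paper intends. The paper gives no separate proof; it only notes that the centred moments are a polynomial image of $\mathbf p_q$. The corollary is therefore the delta method applied to Theorem~\ref{thm:joint-moments}, with consistency of $\widehat H_q\widehat\Sigma_q\widehat H_q^\top$ coming from Proposition~\ref{prop:moment-cov} and continuity of the Jacobian. As an abstract proof of the statement, which refers only to ``its Jacobian'', your proposal is complete, and your entries for $2\le j\le r$ are correct.

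The closed form you give for the $p_1$ column is false, and it fails at exactly the bookkeeping point you flag. The quantity $-r\mu_{r-1,c}^{(G)}$ is the derivative of $c\mapsto\E[(U-c)^r]$ at $c=p_1$ with the law held fixed, so it differentiates only the centring. In the coordinates $(p_1,\ldots,p_q)$, moving $p_1$ with $p_2,\ldots,p_q$ held fixed also moves the $j=1$ term $\binom r1(-p_1)^{r-1}p_1$ through its trailing raw-moment factor. That contributes an extra $r(-p_1)^{r-1}$. Using $\binom rj(r-j)=r\binom{r-1}j$ for the centring part, the correct entry is
\[
\frac{\partial\mu_{r,c}^{(G)}}{\partial p_1}=-r\,\mu_{r-1,c}^{(G)}+r(-p_1)^{r-1}.
\]
Your formula already fails at $r=2$. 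There $\mu_{2,c}^{(G)}=p_2-p_1^2$ has $p_1$-derivative $-2p_1$, whereas $-2\mu_{1,c}^{(G)}=0$.

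An $\widehat H_q$ built from your identity would give an inconsistent covariance estimator. For the coordinate variance it would return $\Var(U^2)$ rather than the correct $\Var(U^2-2p_1U)=\Var((U-p_1)^2)$; for $U$ uniform these are $4/45$ versus $1/180$. Using your identity as the verification check would also flag the correct column as wrong.

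The coefficient $-r\mu_{r-1,c}^{(G)}$ does legitimately appear in the influence function
\[
(x-p_1)^r-\mu_{r,c}^{(G)}-r\mu_{r-1,c}^{(G)}(x-p_1),
\]
but only after the $j=1$ term has been reabsorbed into $(x-p_1)^r$. It is not the Jacobian entry in raw-moment coordinates. Either compute $H_q$ by direct differentiation as above, or estimate the covariance as the empirical covariance of these plug-in influence values. Both routes give a consistent estimator of $H_q\Sigma_qH_q^\top$.
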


\subsection{Simultaneous concentration intervals}

Because every $U^r\in[0,1]$, Hoeffding's inequality and the union bound yield a simultaneous finite-sample result.

\begin{proposition}[Simultaneous distribution-free intervals]\label{prop:moment-hoeffding}
Let
\[
\varepsilon_{n,\alpha,q}
=
\sqrt{\frac{\log(2q/\alpha)}{2n}}.
\]
Then
\[
\Prob\left(
|\widehat p_r-p_r|\leq\varepsilon_{n,\alpha,q}
\text{ for all }r=1,\ldots,q
\right)
\geq1-\alpha.
\]
Thus the intervals $[\widehat p_r-\varepsilon_{n,\alpha,q},\widehat p_r+\varepsilon_{n,\alpha,q}]\cap[0,1]$ are simultaneous confidence intervals for the first $q$ coordinate moments. Pulling these intervals through $G^{-1}$ gives simultaneous confidence intervals for the initial Kolmogorov moments $M_1^{(G)},\ldots,M_q^{(G)}$.
\end{proposition}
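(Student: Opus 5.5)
The plan is to apply Hoeffding's inequality separately to each coordinate power, combine the $q$ events by a union bound, and then transport the intervals to value space by monotonicity of $G^{-1}$.

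\textbf{Per-order bound.} Fix $r\in\{1,\ldots,q\}$. The variables $U_1^r,\ldots,U_n^r$ are i.i.d. and take values in $[0,1]$, since $U_i=G(X_i)\in(0,1)$. Their common mean is $p_r$. Hoeffding's inequality for an average of $n$ variables with range length one gives, for every $\varepsilon>0$,
\[
\Prob\bigl(|\widehat p_r-p_r|>\varepsilon\bigr)\leq 2\exp(-2n\varepsilon^2).
\]
This is the same estimate that underlies Corollary~\ref{cor:hoeffding-ci}, now applied to $U^r$ instead of $U$.

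\textbf{Union bound.} Substitute $\varepsilon=\varepsilon_{n,\alpha,q}$. Then $2n\varepsilon^2=\log(2q/\alpha)$, so each per-order bound equals $\alpha/q$. Summing over the $q$ orders, the probability that at least one $|\widehat p_r-p_r|$ exceeds $\varepsilon_{n,\alpha,q}$ is at most $q\cdot\alpha/q=\alpha$. Taking complements gives the displayed joint bound. No independence between the different orders is needed, which matters here because the $\widehat p_r$ are strongly dependent functions of the same sample.

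\textbf{Transport to value space.} Each $p_r$ lies in $(0,1)$. Hence $|\widehat p_r-p_r|\le\varepsilon_{n,\alpha,q}$ is equivalent to $p_r$ belonging to the interval intersected with $[0,1]$. So the simultaneous statement for the coordinate moments holds on the same event. For the Kolmogorov moments, use the extended inverse convention of Theorem~\ref{thm:fixed-wald}: $G^{-1}(0)=\inf I$ and $G^{-1}(1)=\sup I$. Because $G^{-1}$ is strictly increasing, $p_r\in J_r$ if and only if $M_r^{(G)}=G^{-1}(p_r)\in G^{-1}(J_r)$, for each interval $J_r\subseteq[0,1]$. The event on which all $q$ coordinate intervals cover their targets is therefore exactly the event on which all $q$ pulled-back intervals cover theirs. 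It consequently also has probability at least $1-\alpha$.

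\textbf{Expected difficulty.} There is no genuine obstacle. The only points needing care are the boundary convention when an interval endpoint is clipped to $0$ or $1$, and noting that pulling back preserves simultaneity because the events are identical, not merely comparable.
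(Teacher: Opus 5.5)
Your proof is correct and follows the same route the paper indicates: Hoeffding's inequality for each bounded power $U^r\in[0,1]$, a union bound over the $q$ orders at per-order level $\alpha/q$, and transport through the strictly increasing extended inverse chart, as in Corollary~\ref{cor:hoeffding-ci}. Your remark that the pulled-back coverage event is identical to the coordinate-scale event, so simultaneity carries over exactly, is the right way to close the argument.
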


\subsection{Bootstrap inference for moment vectors}

The bootstrap resamples observations, not moment orders. Each selected observation carries the complete vector
\[
(U_i,U_i^2,\ldots,U_i^q),
\]
so the dependence among the estimated moments is preserved automatically.

\begin{theorem}[Bootstrap validity for moment vectors]\label{thm:moment-bootstrap}
For fixed $q$, the ordinary nonparametric bootstrap consistently estimates the joint law of
\[
\sqrt n(\widehat{\mathbf p}_q-\mathbf p_q).
\]
Under the conditions of Corollaries~\ref{cor:pulled-moments} and~\ref{cor:centered-moments}, the bootstrap delta method also validates the initial Kolmogorov moments and the centred coordinate moments.
\end{theorem}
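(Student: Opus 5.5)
The plan is to treat the moment vector as the sample mean of the bounded random vector $V_i=(U_i,U_i^2,\ldots,U_i^q)^\top\in[0,1]^q$ and apply the standard multivariate bootstrap CLT for means. Since $\|V_i\|\le\sqrt q$, every moment of $V$ exists. Conditionally on the data, the bootstrap mean $\widehat{\mathbf p}_q^*=n^{-1}\sum_i V_i^*$ is an average of i.i.d.\ draws from the empirical law of $V_1,\ldots,V_n$. That empirical law has mean $\widehat{\mathbf p}_q$ and covariance $\frac{n-1}{n}\widehat\Sigma_q$, which tends in probability to $\Sigma_q$ by Proposition~\ref{prop:moment-cov}.

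The first step is the conditional CLT. I would check the Lindeberg condition for the triangular array $n^{-1/2}(V_i^*-\widehat{\mathbf p}_q)$ conditionally on the data. Because $\|V_i^*-\widehat{\mathbf p}_q\|\le 2\sqrt q$ deterministically, the Lindeberg sum vanishes identically once $2\sqrt q<\varepsilon\sqrt n$. The Cramér--Wold device, combined with convergence of the conditional covariance, then gives
\[
\sqrt n(\widehat{\mathbf p}_q^*-\widehat{\mathbf p}_q)\rightsquigarrow\mathcal N_q(0,\Sigma_q)
\]
conditionally, in probability. The ``in probability'' qualifier is needed only because the covariance convergence holds in probability. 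By Theorem~\ref{thm:joint-moments}, $\sqrt n(\widehat{\mathbf p}_q-\mathbf p_q)$ has the same Gaussian limit. Continuity of the limit law lets Pólya's theorem upgrade this to convergence in probability of the bounded-Lipschitz distance, or of the sup-distance over convex sets, between the two laws. That is the first claim.

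The second step is the bootstrap delta method. For initial Kolmogorov moments the map is $\phi(\mathbf p)=(G^{-1}(p_1),\ldots,G^{-1}(p_q))$. Under the conditions of Corollary~\ref{cor:pulled-moments}, $G'$ is continuous and positive at each $M_r^{(G)}$, so $\phi$ is differentiable at $\mathbf p_q$ with derivative $D_q$. For centred moments the map $h_q$ is a polynomial and hence continuously differentiable everywhere, with Jacobian $H_q$. In both cases I would invoke the standard bootstrap delta method (van der Vaart, Theorem 23.5), which needs only Hadamard (here ordinary) differentiability at the true point together with the conditional weak convergence from the first step and the unconditional one from Theorem~\ref{thm:joint-moments}. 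Concretely, $\sqrt n(\phi(\widehat{\mathbf p}_q^*)-\phi(\widehat{\mathbf p}_q))$ equals $D_q\sqrt n(\widehat{\mathbf p}_q^*-\widehat{\mathbf p}_q)$ plus a remainder that is $o_{\Prob^*}(1)$ in probability. The resulting limits $D_q\Sigma_qD_q$ and $H_q\Sigma_qH_q^\top$ match Corollaries~\ref{cor:pulled-moments} and~\ref{cor:centered-moments}.

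The only delicate point is the remainder in the delta expansion. Differentiability holds at $\mathbf p_q$, whereas the bootstrap is centred at the random point $\widehat{\mathbf p}_q$. To handle this I would write both $\widehat{\mathbf p}_q^*-\mathbf p_q$ and $\widehat{\mathbf p}_q-\mathbf p_q$ as $O_{\Prob}(n^{-1/2})$ and apply the first-order Taylor expansion around $\mathbf p_q$ to each. Continuity of $G'$ near $M_r^{(G)}$, or the $C^1$ property of $h_q$, makes the difference of the two expansions equal to the linear term plus $o_{\Prob}(1)$.

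A further technicality arises with $\phi$. Since $\widehat p_r^*$ may fall outside $(0,1)$ only on events of vanishing probability, and in fact $\widehat p_r^*\in[0,1]$ always, the extended-inverse convention creates no issue asymptotically.
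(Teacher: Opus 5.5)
Your proposal is correct and follows the route the paper intends. The paper gives no separate proof here; as in Theorem~\ref{thm:fixed-bootstrap}, it relies on conditional bootstrap validity for the mean of the bounded vector $(U,U^2,\ldots,U^q)$, with whole observations resampled so that the cross-order covariance is preserved, followed by the bootstrap delta method with Jacobians $D_q$ and $H_q$. Your Lindeberg check and your treatment of the delta-method remainder at the random centre $\widehat{\mathbf p}_q$ supply details the paper leaves implicit, and since every $U_i\in(0,1)$ you in fact have $\widehat p_r^*\in(0,1)$, so the extended-inverse convention is never needed.
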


\section{Robustness and efficiency}\label{sec:robustness}

Bounded probability coordinates are often credited with robustness. That description is incomplete unless the inverse chart is kept in view. Boundedness controls the numerator of the influence function. The derivative of the chart at the barycenter controls the amplification that occurs when the coordinate perturbation is pulled back to value space.

\subsection{Fixed-chart influence function}

For a fixed chart,
\[
T_G(P)=G^{-1}\!\bigl(\E_P[G(X)]\bigr)
\]
has influence function
\[
\IF(x;T_G,P)
=
\frac{G(x)-p_G}{G'(b_G)}.
\]
It is bounded because $G(x)\in(0,1)$, provided $G'(b_G)>0$. The gross-error sensitivity is
\[
\gamma^*(T_G,P)
=
\frac{\max(p_G,1-p_G)}{G'(b_G)},
\]
where the value is interpreted as a supremum if the chart does not attain $0$ or $1$ on $I$.

\subsection{Maximal bias under contamination}

Let $P_\varepsilon=(1-\varepsilon)P+\varepsilon Q$. Since $\E_Q[G(X)]\in(0,1)$, the closure of the contaminated functional range is
\[
\left[
G^{-1}\!\bigl((1-\varepsilon)p_G\bigr),
G^{-1}\!\bigl((1-\varepsilon)p_G+\varepsilon\bigr)
\right].
\]
The endpoints need not be attained by a law on $I$, but they give the sharp closure bound. The finite bound for $\varepsilon<1$ reflects anchoring and should not be compared directly with the breakdown point of an equivariant location estimator.

\subsection{Adaptive robustness}

The adaptive influence function from Theorem~\ref{thm:adaptive-al} is bounded whenever the chart coordinate is bounded, $G'_{\theta_0}(b_0)>0$, and the preliminary influence function $\psi_\theta$ is bounded. Median--MAD calibration is a natural robust choice.

\subsection{Reading inferential uncertainty through the chart}

For a fixed chart,
\[
\tau_G^2
=
\frac{\Var(G(X))}{[G'(b_G)]^2}
\]
separates coordinate variability from inverse-chart amplification. For an estimated chart, the numerator becomes
\[
\Var\left(
G_{\theta_0}(X)-m(\theta_0)
+
\Delta^\top\psi_\theta(X)
\right),
\]
which incorporates the calibration contribution and its covariance with the coordinate mean. Relative to the frozen-chart variance, this change is not sign-definite.

The first term, coordinate variability, records the dispersion that remains after the observations have been placed on $(0,1)$. The factor $1/G'(b_G)^2$ is local geometric amplification. A flat chart at the barycenter produces a steep inverse and enlarges both the influence curve and the sampling variance. Estimated charts incorporate the calibration contribution $\Delta^\top\psi_\theta(X)$ together with its covariance with the coordinate mean.

Tail saturation and local amplification act in different parts of the formula. Saturation limits the leverage of observations far into the tails. A small derivative at the barycenter offers no such protection. It magnifies uncertainty exactly where the estimator is pulled back.

\subsection{Relation to M-estimators}

Adaptive chart barycenters belong to the broad family of asymptotically linear robust location estimators, yet their construction has a recognisable geometric character. Classical M-estimation modifies a score or objective on the observation scale. Chart estimation first changes the scale on which arithmetic averaging is performed. The shape $G_0$ plays a role similar to a tuning function, while the calibration rule determines equivariance and contributes part of the final influence curve.

\section{Numerical illustrations}\label{sec:numerics}

The following calculations are deliberately focused. They illustrate the distinction between an anchored fixed chart and an equivariant calibrated chart. They are not presented as a comprehensive comparison of confidence-interval procedures.

\subsection{Asymptotic variance benchmarks}

For symmetric distributions centred at zero, Table~\ref{tab:avar} reports fixed-chart asymptotic variances
\[
\frac{\Var(G(X))}{[G'(0)]^2}
\]
together with classical comparators. Fixed-chart rows are anchored at the true centre and should be read as oracle-style benchmarks for their own chart-dependent targets. They are not feasible equivariant location estimators.

\begin{table}[H]
\centering
\caption{Asymptotic variance benchmarks for symmetric models centred at $0$.}
\label{tab:avar}
\begin{tabular}{lccc}
\toprule
Estimator & Cauchy & Student $t_2$ & $\mathcal N(0,1)$\\
\midrule
Sample mean & $\infty$ & $\infty$ & $1.000$\\
Sample median & $2.467$ & $2.000$ & $1.571$\\
Huber ($k=1.345$) & $2.842$ & $1.846$ & $1.053$\\
Fixed logistic chart & $1.471$ & $1.104$ & $0.694$\\
Fixed Gaussian chart & $0.797$ & $0.678$ & $0.524$\\
Fixed Cauchy chart & $0.822$ & $0.636$ & $0.450$\\
\midrule
Frozen intrinsic oracle & $0.822$ & $0.667$ & $0.524$\\
\bottomrule
\end{tabular}
\end{table}

\subsection{Monte Carlo variance comparison}

Table~\ref{tab:sim} reports $n$ times the Monte Carlo variance at $n=100$ over $20{,}000$ replications. The contaminated Gaussian law is $0.9\mathcal N(0,1)+0.1\mathcal N(0,10^2)$. The adaptive logistic estimator uses the sample median and raw MAD for calibration. The Huber estimator uses $k=1.345$ and the normal-consistency-corrected MAD as a fixed scale within each sample.

The Cauchy sample mean is labelled divergent because its population variance is infinite and a finite replication estimate has no stable target. The fixed-chart values agree with their asymptotic benchmarks, while the gap between fixed and adaptive logistic rows displays the effect of calibration in these models.

The simulation uses NumPy's PCG64 generator with seed $20260902$. The source archive contains the script that produces Table~\ref{tab:sim}, including the contamination mechanism and the numerical definition of the Huber estimate.

\begin{table}[H]
\centering
\caption{Simulated $n\cdot\Var$ at $n=100$ over $20{,}000$ replications.}
\label{tab:sim}
\footnotesize
\setlength{\tabcolsep}{4pt}
\begin{tabular}{lccc}
\toprule
Estimator & Cauchy & Contaminated Gaussian & $\mathcal N(0,1)$\\
\midrule
Sample mean & divergent & $10.76$ & $0.99$\\
Sample median & $2.54$ & $1.86$ & $1.54$\\
Huber ($k=1.345$, MAD scale) & $3.63$ & $1.45$ & $1.04$\\
Fixed logistic chart & $1.48$ & $0.96$ & $0.69$\\
Adaptive logistic (median/raw MAD) & $3.08$ & $1.53$ & $1.10$\\
\bottomrule
\end{tabular}
\end{table}

\section{Conclusion}

The paper began from a practical discomfort. Once observations have been transported to $(0,1)$, coordinate averages and moments are easy to handle, yet the inferential problem is far from finished. The barycenter and initial Kolmogorov moments must return through $G^{-1}$, centred coordinate moments must be analysed jointly on the probability scale, and the chart may itself have been learned from the data. Probability-coordinate inference has a geometry of uncertainty of its own.

For a fixed chart, the clean route is to estimate and studentise the coordinate mean, form the confidence set on the bounded scale, and pull the entire set back. This gives asymptotic intervals that respect the chart and finite-sample Hoeffding intervals when a distribution-free guarantee is preferred. The intrinsic case gives a useful warning. Replacing the unknown CDF by the empirical CDF does not approximate the frozen oracle mechanism. It uniformises the observed ranks and leaves a median-type order statistic. The oracle influence curve and the median influence curve belong to different functionals.

Estimated charts carry the most interesting inferential burden. Their influence function contains the ordinary coordinate fluctuation and the response of the target to calibration. That second term is measurable, estimable, and unavoidable unless an orthogonality relation happens to remove it. It also explains the correct bootstrap: the chart must be rebuilt inside every resample because calibration is part of the estimator, not a preliminary inconvenience to be frozen and forgotten.

The joint moment theory follows the same philosophy. The finite vector of powers of $G(X)$ should be estimated as a vector, with its covariance intact. Initial Kolmogorov moments follow by pulling the initial coordinate moments through $G^{-1}$. Centred coordinate moments remain on the probability scale because they are not probability levels.

Bounded coordinates do not erase uncertainty. They give it a form that can be read. Coordinate variability records what remains after transformation, the inverse chart determines how strongly that variability returns to value space, and the calibration term records the effect of allowing the geometry to learn from the sample. Once these contributions are displayed separately, the inferential role of the chart is no longer hidden inside a single asymptotic variance.

\end{document}